\newcommand{\ie}{i.e.,}
\newcommand{\eg}{e.g.}
\newcommand{\var}[1]{\textsf{\hbox{#1}}}
\newcommand{\vbm}[1]{\textsf{#1}}
\newlength\mylen
\DeclareMathOperator*{\argmax}{arg\,max}
\newtheorem{ourClaim}{Proposition}
\newtheorem{definition}{Definition}
\newtheorem{remark}{Remark}
\newtheorem{corollary}{Corollary}
\newcommand{\avgN}{\bar{\delta_\Delta}}
\newcommand{\updated}[1]{\textcolor{black}{#1}}
\definecolor{comment-color}{HTML}{606060} 
\title{Geographical Peer Matching for P2P Energy Sharing}
\author[1]{Romaric Duvignau}
\author[1]{Vincenzo Gulisano}
\author[1]{Marina Papatriantafilou}
\author[2]{Ralf Klasing}
\affil[1]{Chalmers University of Technology (CTH), Sweden}
\affil[ ]{\texttt \{duvignau,vinmas,ptrianta\}@chalmers.se}
\affil[2]{CNRS, LaBRI, Universit\'e de Bordeaux, Talence, France}
\affil[ ]{\texttt ralf.klasing@labri.fr}
\begin{document}


\maketitle

\begin{abstract}
Significant cost reductions attract ever more households to invest in small-scale renewable electricity generation and storage. Such distributed resources are not used in the most effective way when only used individually, as sharing them provides even greater cost savings. Energy Peer-to-Peer (P2P) systems have thus been shown to be beneficial for prosumers and consumers through reductions in energy cost while also being attractive to grid or service providers. However, many practical challenges have to be overcome before all players could gain in having efficient and automated local energy communities; such challenges include the inherent complexity of matching together geographically distributed peers and the significant computations required to calculate the local matching preferences. Hence dedicated algorithms are required to be able to perform a cost-efficient matching of thousands of peers in a computational-efficient fashion. We define and analyze in this work a precise mathematical modelling of the geographical peer matching problem and several heuristics solving it. Our experimental study, based on real-world energy data, demonstrates that our solutions are efficient both in terms of cost savings achieved by the peers and in terms of communication and computing requirements. Our scalable algorithms thus provide one core building block for practical and data-efficient peer-to-peer energy sharing communities within large-scale optimization systems.
\end{abstract}



\section{Introduction}

Renewable electricity generation is becoming more affordable to end-users as the initial investment cost has been drastically cut, thus transforming the traditional residential households from consumers into \textit{prosumers}~\cite{schleicher2012renewables} (capable of producing locally their own electricity).
Sharing  resources, for example solar photovoltaic (PV) panels and battery systems, in \emph{Peer-to-Peer} (P2P) setups can be leveraged as a way to optimize the cost-benefits from those \updated{distributed renewable resources.}
Hence, primarily driven by the interest to reduce even more their  cost, there is interest in P2P energy communities~\cite{hahnel2019becoming}, a high-potential concept that caught up the attention of the research community in the recent years (cf. \updated{\cite{tushar2021peer} and references therein}).
The principle of P2P energy sharing is for different end-users to share their resources locally in groups, in order to reduce their energy bill \updated{while increasing energy efficiency and self-reliance within the communities.}
Locally in each community, energy is either ``traded'' at regular intervals (\eg{} every hour) or ``exchanged for free'' with a later gratification scheme.
P2P energy sharing bypasses the centralized grid by letting the peers cooperate in a distributed fashion in order to share in the best way their energy resources. 
Hence, to lower their cost, households are encouraged to use as much as possible the  electricity that is generated locally instead of that from the grid, with local benefits due to reduced tax fees, as well as an increase in local self-consumption.
However, organizing end-users into communities reveals to be a challenging task.
Short-term communities (lasting for e.g. 10min) are formed by using usually game-theoretic approaches~\cite{tushar2019grid} and only taking into account the current state of the system (\eg{} amount of electricity being produced and consumed by the different end-users, as well as market prices).
\updated{Forming long-term communities (used over months or years \cite{chau2019peer, chau2023approximately}) is an ever more rewarding and challenging task.}
Chau~\textit{et al} \cite{chau2019peer} have investigated \textit{stable} partitioning, \ie{} a given partitioning can be rejected if any group of peers would gain more by forming a different one, similar to the \textit{stable marriage}~\cite{manne2016stable} problem for pairs.
\updated{In \cite{zhou2020multi, chau2023approximately}, the partitioning is expanded to communities of size $k > 2$ and partition-forming algorithms are evaluated for groups of size $2$ or $3$ over a set of $30$ households.
Duvignau~\textit{et al} \cite{duvignau2020small, duvignau2021benefits} show that small-scale communities made of a few peers only (2 to 5) are both efficient in terms of data and cost. 
As the authors  essentially focused on energy cost-optimization (increasing benefits for end-users) and data-efficiency (decreasing amount of shared data), no mathematical analysis is performed concerning the cost of computing the different matchings.}
In all the aforementioned works, the computed long-term partitions in the experimental study do not involve a high number of nodes (100 at maximum) or larger neighborhoods than those of size $3$, nor do they use information about the geographical positions of the nodes.

\paragraph*{Motivations}

Current state-of-the-art~\cite{heinisch2019organizing,chau2023approximately,duvignau2021benefits} relies on exhaustive search to compute the optimal solution from datasets containing only a small number of nodes.
To scale up towards large systems, we introduce the \textit{Geographical Peer Matching} (GPM) problem that consists in forming the energy communities based on both geographical information about the peers as well as their local matching preferences.
This provides a natural way of reducing the search space but as we show in this work, dedicated algorithms are \updated{still} required to cope with the computational complexity of the GPM problem.

\paragraph*{Challenges and Research Questions}

The main challenges in establishing efficiently  P2P energy sharing communities are threefold: (i) peers continuously produce data and have limited knowledge of how their future local data will look like, 
(ii) computing peers' optimization options for preferences requires both access to the relevant data and the execution of a 
\updated{LP solver on a large input (cf.~\cite{duvignau2020small,duvignau2021benefits} and Section~\ref{subsec:LPsolver})}
and
(iii) peers should favor getting matched with geographically neighboring peers to reduce transmission losses and the impact on the underlying infrastructure.
In particular, weights in the matching (\ie{} the local matching preferences) are not 
\updated{known at the system's start}
but must be computed on the fly, and this requires additional communication between the  participants.
An additional difficulty stems from allowing the formation of \updated{groups of size $4$ and above}, as the matching problem becomes intractable in this case and thus in practical systems, it is not feasible to do exhaustive searches any longer.
%
%
This raises the following research questions: (i) Can one translate the challenges to a formal model that can capture the benefits and complexity for the prosumers? (ii) How do the maximum size and geographical diameter for communities influence the cost-efficiency of the peer matching? (iii) Is it possible to design matching algorithms that are efficient in terms of \updated{cost (with good quality for the solution) and scalable (with low computational burden) and where lies the best trade-off}?

\paragraph*{Contributions}

\updated{We present partitioning mechanisms between consumers and prosumers to form P2P energy sharing communities.}
One core contribution is the presentation of a mathematical modelling of the GPM problem expressed as finding a maximum weight bipartite partitioning in hypergraphs. 
\updated{With Proposition~\ref{prop:ratio_bounds}, we provide an important theoretical result for the smart grid research community, formally lower- and upper-bounding the benefits that can be extracted from a prosumer by a coordinated group rather than independent individuals.}
We further propose and analyze different algorithms that solve the GPM problem. 
We study the cost-efficiency of the different algorithms based on an experimental study involving consumption data from 2221 real households and realistic solar profiles (both over a year), and using a realistic distribution of renewable energy resources among the peers.
Our findings highlight that our solution for P2P energy sharing communities is both cost-efficient (providing significant cost savings to all peers) and scalable (capable of scaling to pools of at least thousands of users).

\paragraph*{Plan}

Section~\ref{sec:system_model} presents in more detail P2P energy sharing and cost-optimization of distributed resources in this context. Section~\ref{sec:algorithmic_formulation} presents our mathematical modelling of the GPM problem and some of its variants. 
In Section~\ref{sec:algorithms} we present our algorithms and analyze their computational overhead.
In Section~\ref{sec:evaluation}, we present an extensive experimental study of the performance of the algorithms based on electricity data from real-world households. 
The  section following that discusses related work, whereas Section~\ref{sec:conclusion} presents our conclusions.

\section{System Model} \label{sec:system_model}

We define here the notion of P2P energy sharing communities, its underlying assumptions and consequences. 
We then define how to optimize peers cost in such a context and present how this translates into a matching problem when several communities are considered.

\begin{figure}[t]
    \centering
    \includegraphics[width=0.95\linewidth]{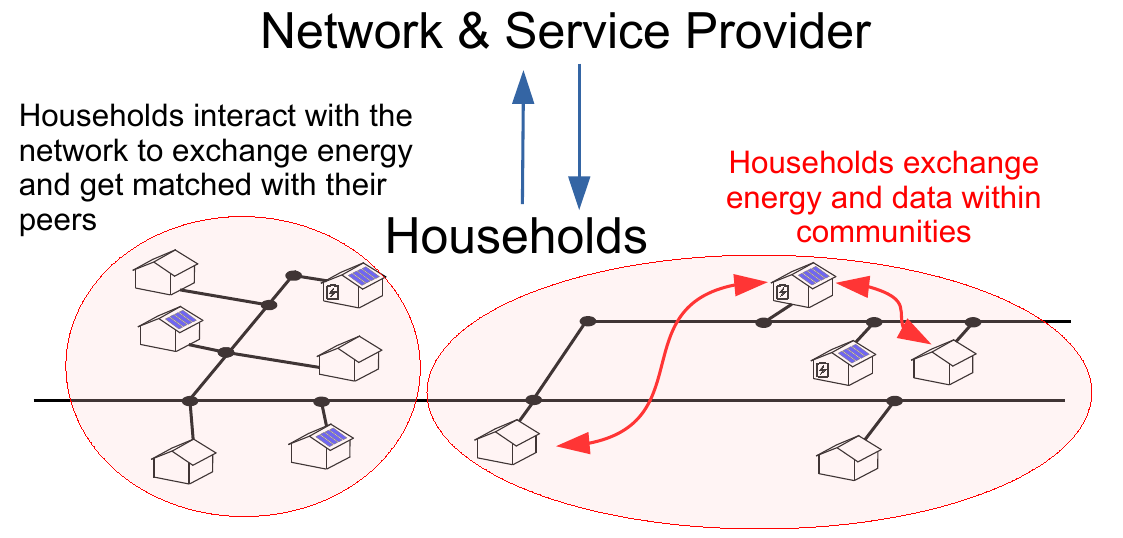}
    \caption{Overview of P2P energy sharing showing grouping and interactions between prosumers (equipped with PV panels on roof top and optionally a battery system) and traditional consumers (without any energy resources).}
    \label{fig:simple_fig}
\end{figure}

\subsection{P2P Energy Sharing}

\begin{table*}[t]
    \updated{
    \caption{Nomenclature used in the paper (P2P Energy Sharing variables, constants and functions in the left column, graph symbols/functions in the right column).} 
    \label{tab:nomenclature}
    \centering
    \fbox{
    \resizebox{\textwidth}{!}{%
    \begin{tabular}{l l | l l}
        \textbf{Symbol} & \textbf{Usage} & \textbf{Symbol} & \textbf{Usage} \\
        \rule{0pt}{3ex}$\var{cost}(h,t)$ & electricity cost for user $h$ at hour $t$ (€) & $V$ & pool (set of all households), $V = P \cup C$  \\ 
        $\var{bill}(h, [t_0, t_r])$ & electricity cost for $h$ over period $[t_0, t_r]$ (€) & $P$ & set of prosumer households \\ 
        $\vbm{el}_{in}(h,t)$ & amount of electricity bought from the grid by $h$ at hour $t$ (kWh) & $n$ & number of prosumers, \ie{} $|P|$ \\
        $\vbm{el}_{out}(h,t)$ & amount of electricity sold to the grid by $h$ at hour $t$ (kWh) &  $C$ & set of consumer households \\
        $\var{el}_{cons}(h,t)$ & consumption (or electricity demand) of user $h$ for hour $t$ (kWh) & $m$ & number of edges, \ie{} $|P| \cdot \avgN$ \\
        $\var{el}_{gen}(h,t)$ & electricity generated by $h$ during hour $t$ (kWh) & $\Delta$ & geographical search radius \\
        $\vbm{bat}(h,t)$ & battery level at time $t$ for prosumer $h$ (kWh) & $k$ & maximum size for communities \\
        $\var{price}(t)$ & price of electricity at hour $t$ (€/kWh) & $\avgN$ & average neighborhood size \\
        $\vbm{sun}(t)$ & the sun's intensity at hour $t$ (kWh/kWp) & $w(e)$ & weight of the hyperedge $e$ \\
        $\var{tax}$ & relative tax level on top of market-price (e.g. 25\%) & $\textsf{dist}(v,v')$ & geographical distance between $v$ and $v'$\\
        $\var{el}_{tax}$ & fixed electricity tax added on top of market price (€/kWh) &$E_\Delta$ & pairs of $E \subseteq P \times C$ within distance $\Delta$ \\
        $\var{el}_{net}$ & small payment for selling electricity to the grid (€/kWh)& $\var{WA}_t$ & cost-based memoryless weights at time $t$\\
        $\var{PV}_h$ &  PV capacity for prosumer $h$ (kWp) & $\var{WB}_t$ & cost-saving memoryless weights at time $t$\\
        $\var{B}_h$ & battery capacity for prosumer $h$ (kWh) & $\var{WC}_t$ & cost-based memoryful weights at time $t$\\
        $\var{gain}(G, [t_0, t_r])$ & cost saving of community $G$ over period $[t_0, t_r]$ (€) &$\var{WD}_t$ & cost-saving memoryful weights at time $t$\\
    \end{tabular}
    }
    }
    }
\end{table*}

Let us recall the basic requirements of the traditional energy infrastructure, 
\updated{considering only electricity as energy}
for the purpose of simplicity.
Consumers must match their consumption by importing from the grid their electricity demand, whereas prosumers 
use in priority their local generation (\eg{} PV panels); 
in case of surplus, the energy is sold to the grid, while in the opposite situation, prosumers must import electricity from the grid.
The situation complicates for prosumers equipped with both electricity production and storage, 
as they need to take online decisions whether to store the surplus or sell it to the grid, and whether to use stored electricity or rather buy it from the grid.

Set in the context of increasing decentralization of the energy infrastructure, \textit{P2P energy sharing} (see Fig.~\ref{fig:simple_fig}) consists in forming local \textit{energy communities} of cooperative end-users.
The goal is to make the most of the distributed resources and hence achieve even greater cost reductions.
However, this means that participants need now to consider also the state and decisions of the other actors in order to coordinate and optimize the benefits of their local resources.
\updated{In such energy communities, any energy consumption can be offset by importing the equivalent amount of electricity from a peer; such an exchange may get instantly gratified leading to a local trading market~\cite{paudel2018peer,zhang2019two}. At regular time intervals (\eg{} 1 hour), the community needs to coordinate the usage of energy resources (e.g. which battery system(s) to charge or discharge and by how much, and by consequence how much energy needs to be traded with the central grid).}
Since there is an inherent infrastructure cost to allow and maintain energy exchanges \updated{among} a large number of end-users, exchanges have been restrained to occur within independent \textit{long-term communities} of end-users~\cite{long2018peer, heinisch2019organizing, chau2023approximately, duvignau2021benefits} that can last for month(s) or year(s).
At the end of predetermined billing periods, each user pays an electricity bill \updated{taking into account their own consumption and all exchanges that occurred within the said period.}

\subsection{Single-user Cost-Optimization (LP-solver)} \label{subsec:LPsolver}

The cost-optimization problem consists in minimizing the yearly electricity bill for a particular end-user, based on locally available electricity data: amount of consumption, generation and price. 
In order to minimize the cost, we adopt the following LP-formulation cost-optimization following similar models used in the recent literature (cf.~\cite{chau2019peer,heinisch2019organizing,duvignau2021benefits}).
The electricity cost $\var{cost}(h,t)$ of the end-user $h$ at hour $t$ is assumed to be as follows \updated{(cf. Table~\ref{tab:nomenclature} for definitions of all variables used hereafter)}:
\vspace{-0.25cm}
\begin{multline}\label{eq:cost} \medskip
   \var{cost}(h,t) = \vbm{el}_{in}(h,t) \cdot \left( \var{price}(t) \cdot (1+\var{tax}) + \var{el}_{tax} \right) \\- \vbm{el}_{out}(h,t) \cdot \left( \var{price}(t) + \var{el}_{net} \right), 
\end{multline}
%
We assume $\var{el}_{net} < \var{el}_{tax}$ and $\var{tax} \geq 0$.
Consumers do not have any resources, hence the yearly cost is obtained directly from their consumption, that is $\vbm{el}_{in}(h,t) = \var{el}_{\hbox{cons}}(h,t)$ for all hours $t$. Prosumers with only electricity generation but no storage have always interest to use in priority their local production to avoid to pay tax on electricity coming from the grid; hence, they optimize their cost by setting:
\hspace{-0.25cm}
$$
    \vbm{el}_{in}(h,t) = 
\begin{cases}
    \var{el}_{gen}(h,t) - \var{el}_{cons}(h,t),& \text{if } x_{h,t}>0,\\
    0,              & \text{otherwise;}
\end{cases}
$$
$$
    \vbm{el}_{out}(h,t) = 
\begin{cases}
    \var{el}_{cons}(h,t)-\var{el}_{gen}(h,t),& \text{if } x_{h,t}<0,\\
    0,              & \text{otherwise;}
\end{cases}
$$


\noindent with the electricity balance $x_{h,t} = \var{el}_{gen}(h,t) - \var{el}_{cons}(h,t)$ where $\var{el}_{gen}(h,t) = \var{PV}_h \cdot \vbm{sun}(t)$. 
Prosumers having both electricity generation and storage can optimize their cost over a period of time from $t_0$ to $t_r$ through running an LP solver of the following formulation:
\begin{itemize}
    \item \textbf{Objective function:} \\
    minimize $\var{bill}(h, [t_0, t_r]) = \sum_{t = t_0}^{t_r} \var{cost}(h,t)$. 
    \item \textbf{Constraints (for all $t_0 \leq t \leq t_r$):} \\
    $0 \leq \vbm{bat}(h,t) \leq \var{B}_h$,  and\\
    $\vbm{bat}(h,t) = \vbm{bat}(h,t-1) +  \var{el}_{gen}(h,t) - \var{el}_{cons}(h,t) + \vbm{el}_{in}(h,t) - \vbm{el}_{out}(h,t).$ 
    \item \textbf{Optimization variables:} \\
    $\{\vbm{bat}(h,t), \vbm{el}_{in}(h,t), \vbm{el}_{out}(h,t) \;|\; t_0 \leq t \leq t_r\}$. \smallskip
\end{itemize} 
\vspace{-0.1cm}
with $\vbm{bat}(h,t_0-1)$ indicating the initial battery level. 

\updated{This simplified model introduced in~\cite{duvignau2021benefits} as ``aggregate model'' does not account for transmission and battery losses. 
Note that in order to take continuous online decisions concerning usage of one's battery system, one would need to use forecast data as input to the optimization problem as in~\cite{long2018peer,duvignau2020small,duvignau2021benefits}.
}

\subsection{Community Cost-Optimization} \label{subsec:community}

One may wonder how the end-users can in our context make the most of their distributed resources. 
Instead of fixing one of the many forms of a local trade market, we consider as in e.g.~\cite{heinisch2019organizing,duvignau2020small,duvignau2021benefits} that the energy exchanges occur ``for free'' within the community to obtain and analyze the lowest achievable cost as a community, while postponing the billing of individual exchanges to the end of the billing period. 
\updated{Hence, our focus in this work is on finding communities that reach the best benefits overall. A subsidiary mechanism can occur  when the billing period ends to distribute the gain achieved by the community among the peers. }
Under the above assumptions and neglecting battery and transmission losses and communication issues, each community becomes equivalent to a single prosumer with aggregated PV and battery capacities over the full community. 

\paragraph*{\updated{Example}}

Let us consider as an example a particular grouping of 6 households $V = P \cup C$, including 1 battery-equipped prosumer $p_1$ among 3 prosumers $P = \{p_1,p_2,p_3\}$ and 3 consumers $C = \{c_1,c_2,c_3\}$, as the one on the left of Fig.~\ref{fig:simple_fig}.
Neglecting losses and communication faults, the community is then equivalent to a single larger entity with aggregated consumption from the 6 households, aggregated production from the 3 prosumers and having as much \updated{battery} storage as $p_1$.  
\updated{
Suppose now the annual cost (i.e. $T = [t_0,t_r]$ spans one year) of each household $h \in V$ was $\var{bill}(h, T)$ = 1000€ each, for a total of 6000€. 
As a single community taking coordinated decisions, they may only have to pay 4800€, or 800€ each if the 1200€ gain is spread equally among participants.
We use hereafter \textit{cost saving} for the reduction in cost obtained through cooperation, \eg{} 1200€ in this example.
}

\updated{
\vspace{-0.3cm}
\begin{definition}\label{def:cost_savings}
    Let the {\em cost saving} (or {\em gain}) for the community $G$ be defined as $$\var{gain}(G, [t_0, t_r]) = \sum_{x \in G} \var{bill}(x, [t_0, t_r]) - \var{bill}(\bar{G}, [t_0, t_r]),$$ where $\bar{G}$ is an aggregated prosumer equivalent to $G$, defined by: $\var{PV}_{\bar{G}} = \sum_{x \in G} \var{PV}_x$, $\var{B}_{\bar{G}} = \sum_{x \in G} \var{B}_x$, and for each $t \in [t_0, t_r]$, we have $\var{el}_{gen}(\bar{G},t) = \sum_{x \in G} \var{el}_{gen}(x,t)$,  $\var{el}_{cons}(\bar{G},t) = \sum_{x \in G} \var{el}_{cons}(x,t)$.
\end{definition}
}

\subsection{From Communities to Partitions}

Among a \textit{pool} (set of end-users willing to participate in P2P energy sharing), several communities can be managed independently from each other. 
How to partition efficiently a group of users into independent communities is the main focus of the present work.
For any given partitioning \updated{(e.g. the partition of the households of Fig.~\ref{fig:simple_fig} into $2$ communities, one of size $6$ and one of size $5$)}, we can associate a \textit{global cost saving} corresponding to the sum of the cost savings of each community.
Optimizing cost-efficiency of all resources in a given pool corresponds, from a \updated{centralized} perspective, to maximizing the global cost saving.
\updated{One can thus summarize the problem of forming P2P energy communities as follows:
\begin{itemize}[leftmargin=*]
    \item[-]\textbf{Input:} (forecast or historical data for) (1) energy consumptions over  timespan $\mathcal{T} = [t_0,t_r]$ for each household $h \in V$, where $V = P \cup C$ is made of a set $P$ of prosumers and $C$ of consumers; (2) local solar intensity (depending on geographical location) over $\mathcal{T}$; (3) electricity prices (possibly set at regional level) over $\mathcal{T}$.
    \item[-]\textbf{Ouput:} A partition $M$ of $P \cup C$ into independent groups (whose size may be limited by a certain constant $k$).
    \item[-]\textbf{Metrics of interests:} (1) amount of global cost saving obtained by the partition $M$ (that is, the sum of the communities' gains, i.e., $\sum_{G \in M} \var{gain}(G, [t_0,t_r])$, cf. Definition~\ref{def:cost_savings}); (2) computational overhead of calculating $M$. 
\end{itemize}
}

\subsection{From Partitions to Geographical Peer Matchings} \label{subsec:peer_matching}

\updated{
The search space for possible partitions of the pool of households into independent communities is large and much beyond computational capabilities of any real system of reasonable size. 
We hence propose several basic restrictions on the possible partitions to reduce the combinatorial possibilities.
First, communities made uniquely of consumers and single-node communities can be discarded as they provide absolutely no benefit in terms of cost-saving.
Second, following results from previous works in the area~\cite{heinisch2019organizing,duvignau2021benefits}, we restrict communities to be made of a single prosumer as allowing several prosumers provides very little to no advantage in terms of global cost saving in comparison to splitting such group into two or more smaller communities.
Third, we propose to limit communities to be only made of nearby households by restricting the maximal distance between the prosumer and any consumer in a given community to be within a certain constant $\Delta$ (hence any household of a community is always within $2\Delta$ of each other).
Limiting the search radius to $\Delta$ allows the service provider managing the P2P energy system to use aggregators\footnote{intermediate infrastructure level between end-users and service provider where data can be retrieved almost in real time and with fine granularity.} in charge of smaller geographical areas. Remote control of the end-users' distributed resources could in turn be delegated to such aggregators relieving the users of data exchange and computational work during the P2P energy sharing process.
In addition, having geographically closer communities allows to have better independence and load-balance in the system, further reducing the impact of introducing sharing communities on the underlying grid infrastructure.
Hence, the problem of forming communities reduces to finding a \textit{matching} (or assignment) of nearby consumers to each prosumer in the system, and we thus refer to it as the \textit{Geographical Peer Matching} (GPM) problem, further formalized and defined in Section~\ref{sec:algorithmic_formulation}.
}

\updated{ 
%
%
In the GPM problem, peers are matched together based on their current \textit{preferences} $w_t(G)$ at time $t$, which indicate e.g. the potential saving of a certain community $G$ at time $t$. 
Since data is not known ahead of time, there are two strategies to compute $w_t(G)$: either using only past data, \ie{} $w_t$ is computed based on data recorded within some timespan $[t-\tau,t]$ for some $\tau$, or using past and projected data. 
Since we consider in this work communities lasting for long period of time (months to years), the preference calculation based on projection is not the most appropriate solution as the accuracy of the forecast degrades fast as the horizon grows (\eg{} poor prediction is expected past 48h).
In this context, peers affinity is more reasonably captured by setting $\tau$ to the same length as the billing period (e.g. one year).
\updated{We would like to highlight that a long time horizon for the duration of the energy communities does not imply that one can disregard the computational burden of calculating the peer matching. Indeed, the GPM problem is computationally hard to solve (cf.\S~\ref{subsec:complexity_measure}) and calculating an optimal solution for a large instance is considerably beyond the capacity of the computing resources available at the service provider for P2P energy-sharing.}
In addition, in order to compute $w_t(G)$, necessary data must be transmitted (enduring some communication overhead) and a run of a LP-solver (requiring computation overhead) is required to solve the cost-optimization problem (see~\S~\ref{subsec:LPsolver} and \S~\ref{subsec:community}). 
Hence, the computation of a single preference $w_t(G)$ is therefore both costly in terms of data exchanged and local computation. 
The goal of our approach is thus for the group matching to be computed while minimizing as much as possible computation of such preferences. 
}

\subsection{General System Considerations} 

We assume customers can see and use their own data, but do not have access to their peers' data.
To minimize data exchanges and reduce stress on the architecture, we assume 
the matching happens at a higher level through a third party dedicated entity.
This centralized point of view assumes the end-users have  subscribed to such an external service provider (being the energy provider or a third agent) in order to participate in the sharing process, and paying for the service through a \updated{fixed} share of the cost saving obtained by each community. 
Under this setting, the service provider is in charge of grouping prosumers and consumers and supervising the transmission of data needed for the matching; its goal is thus to achieve the best global cost saving to maximize its own benefit as well.
\updated{This also means that, once a matching has been decided and propagated to the peers by the system, communities can then work in an independent fashion. In particular, they do not further need to rely on the service provider for managing their every day exchanges, neither for optimizing their electricity cost.}
Since end-users change their consumption patterns and may revoke their will to participate through time,
it may be beneficial to recompute the matching after a billing period has elapsed.
\updated{
In our experimental evaluation presented in Section~\ref{sec:evaluation}, we have used historical data for one year as input to the aggregate model (LP-optimization) to calculate the peers' matching preferences. 
This is based on the assumption that the peers participating in long-term energy communities will likely reproduce overall the same or similar consumption patterns over time (and entailing similar matching preferences in the future).
}

\begin{figure*}[t]
    \centering
    \includegraphics[width=\linewidth]{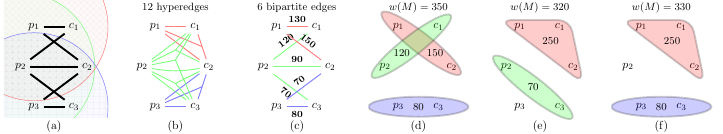}
    \caption{Illustration of the matching procedures with $P = \{p_1, p_2, p_3\}$, $C = \{c_1, c_2, c_3\}$ and $k = 3$: (a) authorized edge set $E_\Delta$ for a given search radius $\Delta$, (b) all hyperedges based on $E_\Delta$, (c) pairwise weights, and (d) Round Robin, (e) Single Pass, and (f) Classic Greedy matchings.}
    \label{fig:matchings}
\end{figure*}


\section{Algorithmic Modeling for Geographical Peer Matchings} \label{sec:algorithmic_formulation}

We present in this section a formalism for the GPM problem in terms of finding a maximum-weight matching in bipartite hypergraphs. \updated{We then present how to measure the complexity of algorithms solving the problem and how certain considerations on the behaviors of the peers' preferences can simplify the GPM problem.}



\subsection{The GPM Problem Abstraction} \label{subsec:pb_abstraction}
\subsubsection{Preliminaries}

A hypergraph $\mathcal{G} = (V,E)$ is made of a set of vertices $V$ and a set of hyperedges $E \subseteq 2^V \setminus \emptyset$, where a hyperedge $e \in E$ is any non-empty subset of vertices.
A hypergraph is said to be \textit{weighted} if the hypergraph is associated with a weight function $w : E \rightarrow \mathbb{R}$.
We say a hypergraph is $k$-bounded if all its hyperedges are of size at most $k$; a $3$-bounded hypergraph is displayed in Fig.~\ref{fig:matchings}(b).

A matching $M$ in a hypergraph $\mathcal{G} = (V,E)$ is a set of \textit{disjoint hyperedges} ($e_1$ and $e_2$ are considered disjoint when they do not share any vertices, \ie{} $e_1 \cap e_2 = \emptyset$). 
The weight of such a matching is the sum of the weights of the selected hyperedges that it contains, \ie{} $w(M) = \sum_{e \in M} w(e)$, slightly abusing the $w$-notation. 
Given a partition of the vertices into two disjoint sets $P$ and $C$, a \textit{bipartite hypergraph matching} (BHM) is a matching of a hypergraph $\mathcal{G} = (P \cup C, E)$ that contains in each selected hyperedge exactly one vertex within $P$, \ie{} $M$ is bipartite if $\forall e \in E$, $|e \cap P| = 1$.
A $k$-bounded matching ensures that each output group is of size at most $k$; $3$-bounded BHMs are displayed in Fig.~\ref{fig:matchings}(d)-(f). \looseness=-1


\subsubsection{Geographical Peer Matching (GPM)} 

We define the \textit{Maximum-Weight Bipartite Hypergraph Matching} (MWBHM) problem that consists in finding a bipartite hypergraph matching with maximum weight in a given hypergraph over vertex set $P \cup C$. 
The problem becomes dynamic when the weighting function $w$ is time-dependent $w_t$ and the matching problem should be solved for each time step.
The \textit{GPM} problem of parameter $(k,\Delta)$\updated{, for $k \geq 2$ and $\Delta > 0$, written in short form as $(k,\Delta)$-GPM,} is a bounded version of the MWBHM problem where one adds the following three additional conditions: %
\begin{enumerate}[leftmargin=*,topsep=0pt]
\item \textbf{Neighborhoods}: $M$ is a $k$-bounded bipartite hypergraph matching, \ie{} $\forall e \in M, |e| \leq k$.

\item \textbf{Spatiality}: peers are geographically distributed and the matching $M$ should adhere to each peer's locality: 
every consumer of a selected hyperedge of $M$ must be within geographical distance $\Delta$ of its matched prosumer, \ie{} $\forall e \in M, \{p\} = e \cap P, \forall c \in e \cap C, \; \textsf{dist}(p,c) \leq \Delta$. 

\item \updated{\textbf{Computationally-intensive weights}: weights are assumed unknown beforehand but must be dynamically calculated (cf. \S~\ref{subsec:peer_matching}), and the function is considered expensive (due to inherent communication and computational costs) and the main bottleneck of the system. Complexity of an algorithm solving the GPM problem is thus mainly measured in the number of $\textsf{weight}$ computations, as further explained in \S~\ref{subsec:complexity_measure}.}
\end{enumerate}


\subsubsection{\updated{P2P Energy Sharing as a Hypergraph Matching Problem}} \label{subsec:p2phypergraph}

We model the problem of forming P2P energy sharing communities in a continuous fashion as a (dynamic) GPM problem of parameters $k$ and $\Delta$. 
The underlying $k$-bounded hypergraph is $\mathcal{G} = (V, \mathcal{E}_\Delta)$ where the vertex set $V = P \cup C$ is made of $P$, the set of prosumers (users equipped with renewable energy resources) and~$C$, the set of consumers (with no resources). 
The set of hyperedges $\mathcal{E}_\Delta$ is made of all possible hyperedges that can be part of a $k$-bounded BHM and consumer-prosumer pairs are pairwise within distance $\Delta$ (where $k$ represents the maximum allowed size for the communities), \ie{} $\mathcal{E}_\Delta \subseteq \{ e = (p,X) \in P \times 2^C \;|\; 2 \leq |e| \leq k, \; \Delta \geq \max_{c \in X} \textsf{dist}(p,c)   \}$.
The weighting function $w_t(G)$ is dynamic and can be used to \eg{} capture the cost saving of a particular community $G$ at time $t$ (cf. \S~\ref{subsec:weights}). 
\updated{Then, adding the hyperedge $G$ in the BHM is equivalent of forming the P2P energy sharing community $G$.} 

\subsection{Complexity of the GPM problem} \label{subsec:complexity_measure}

\subsubsection{Complexity Measures for the GPM problem}

\updated{
The GPM problem belongs to the family of \textit{Discovery Problems}~\cite{duvignau2023greediness} where the input is not entirely accessible at the algorithm's start but must be queried during its execution.
In the case of the GPM problem, the discovery deals with the weight of hyperedges (\ie{} possible communities in the P2P energy sharing context) and as the main bottleneck in computing a solution, the number of calls to the $\textsf{weight}$ function is one of the main complexity measures to evaluate an algorithm, along with the quality of the solution. 
Indeed, one should not forget that GPM is essentially a sub-variant of MWBHM and hence its primary goal is still to find the best possible matching of the hyperedges.
GPM can therefore be considered as a \textit{bicriteria} optimization problem: maximize the weight of the matching and minimize the number of calculated weights.
}
%
%
Hence, considering the difficulty of finding the optimal solution (cf. Proposition~\ref{proposition1}) and the additional assumption on computational-intensive weights, an algorithm $\mathcal{A}$ that produces a matching $M$ as solution to the GPM should be evaluated on two main criteria: (1) the \textbf{number of weights} that were computed by $\mathcal{A}$ in order to find $M$, and (2) the \textbf{quality of the solution} indicated by $w(M)$.
\updated{
One should observe that minimizing the number of weight calculations alone is trivial by just producing an arbitrary valid matching without calling the weight function. 
Let's note that finding the minimum number of weights that need to be computed to reach a fraction $\alpha$ of the weight of the optimal matching is out of the scope of the present work.
}

\updated{
\vspace{-0.3cm}
\begin{definition}\label{def:neighborhoods}
    We call {\em neighborhood} of node $v \in P$ the set $$N_\Delta(v) = \{v' \;|\; \textsf{dist}(v,v') \leq \Delta\},$$ and denote the {\em average neighborhood size} by $$\avgN = \frac{1}{|P|} \sum_{v \in P} |N_\Delta(v)|,$$ used for complexity computations in the remaining.
\end{definition}
}

\paragraph*{Example}
For instance, assuming that the nodes are spread uniformly and independently on a square zone of size $L \times L$ with toroidal properties to simplify\footnote{A torus topology means that each node has on average the same number of neighbors regardless of its position on the map and is very similar to the usual map when $L \gg \Delta$.}, \updated{it is easy to calculate the average neighborhood size as $\avgN = (n-1) \cdot \pi (\Delta/L)^2 = \Omega(n)$ for $L = \Omega(\Delta)$ where $n = |P|$ (recall Table~\ref{tab:nomenclature} contains definitions for all variables).}

\subsubsection{Computational complexity for the GPM problem}

\begin{ourClaim} \label{proposition1} 
For $k \geq 4$,  the $(k,\Delta)$-GPM problem is not approximable within a factor of $o(k/\log k)$ in polynomial time, unless P = NP.
\end{ourClaim}

\begin{proof} 
In general, the hypergraph matching (HM) problem that consists in computing the maximum-weight matching of hyperedges does not permit a polynomial time $o(k/\log k)$-approximation unless P = NP~\cite{HazanSS06}.
There is a trivial reduction to the weighted $k$-set packing problem, known to be NP-complete from Garey and Johnson~\cite{garey1979computers}.
Indeed, a hyperedge is nothing more than a subset of the vertex set, and the $k$-set packing problem~\cite{chandra2001greedy} is looking for a maximum weight sub-collection of disjoint sets, which is equivalent to finding a maximum-weight matching of non-overlapping hyperegdes.

Now, adding the bipartite constraint is not reducing the difficulty of the problem, as one can reduce the GPM problem to HM as follows.
Let $\mathcal{G} = (V,E)$ be a $k$-bounded hypergraph and let $\mathcal{G}' = (V_1 \cup V_2, E')$ be a hypergraph that contains all vertices of $\mathcal{G}$ plus $|E|$ ``extra vertices'', \ie{} $V_1 = \{ v_e \;|\; \forall e \in E\}$ and $V_2 = V$. 
Then all hyperedges of $\mathcal{G}'$ are made of those of $\mathcal{G}$ with one extra vertex in each, that is $E' = \{ e \cup \{v_e\} \; | \; e \in E\}$ with $w(e \cup \{v_e\}) = w(e)$ for every $e \in E$. 
Any matching $M$ of $E'$ is only made of edges that contain exactly one vertex of $V_1$ each, hence is a bipartite hypergraph matching under our definition. 
In turn, $M$ is also a matching of $\mathcal{G}$ (just discard the extra vertex in each hyperedge).
Last, we can easily affect a position to each vertex so that all vertices are within distance $\Delta$ from each other for any $\Delta > 0$, hence defeating the additional spatial constraint.
\end{proof}

\updated{We show in Proposition~\ref{proposition1} that for $k \geq 4$, the $(k,\Delta)$-GPM is a computationally intractable problem~\cite{hopcroft2001introduction}, therefore justifying the introduction of dedicated heuristic algorithms in Section~\ref{sec:algorithms}.}
When forgetting the spatial constraint, we note that for $k = 2$, the problem becomes polynomial and is equivalent to finding the maximum-weight matching in a weighted bipartite graph, usually then named ``the assignment problem''. This is a classic problem where the Hungarian algorithm~\cite{kuhn1955hungarian} provides the optimal solution in time $\mathcal{O}(nm + n^2 \log n)$ for $n$ vertices in the smaller vertex set and $m$ edges, cf.~\cite{ramshaw2012minimum}. 
Now adding spaciality and assuming $\avgN$ as the average neighborhood size for search range $\Delta$, this gives a running time of $\mathcal{O}(n^2 \avgN + n^2 \log n) = \mathcal{O}(n^3)$. We note that such a running time can be already prohibitive for large $n$ and even for $k = 2$, the problem becomes more challenging when considering its distributed equivalent~\cite{lotker2015improved}. 
\updated{Efficient discovery algorithms for the assignment problem were recently explored in~\cite{duvignau2023greediness}.
For $k = 3$, the complexity of the GPM problem remains an open issue.}


\subsection{Reduction and Approximation for the GPM Problem} 

\subsubsection{Reduction to a One-to-Many Assignment Problem} \label{subsec:assignment_pb}

Let's assume here that the weight of any group $G = p \cup \{c_1, \dots, c_{\ell}\}$ with $\ell \leq k-1$, $p \in P$ and $c_i \in C$, can be calculated as the sum of the individual \textit{pairwise weights}, \ie{} $w(G) = \sum_{1 \leq i \leq \ell} w(\{p, c_i\})$. 
In this situation, finding the best group (in terms of weight) of size $k$ containing $p$ is then equivalent of picking the $k-1$ best partners for $p$.
Hence, the maximum weight matching can be reduced to a one-to-many assignment problem, that matches members of the set $P$ with at most $k-1$ members of the set $C$ such that the sum of the individual pairwise weights, \ie{} the ``edge weights'' $w(\{p, c\})$, is maximum. 
This problem can be further reduced to the classical and well known one-to-one assignment problem in the following manner: for each $p \in P$, make $k-1$ copies $p^1, \dots, p^{k-1}$ of node $p$, while keeping the original weights, \ie{} $\forall c \in C, w(\{p^j, c\}) = w(\{p, c\})$.
Finally solve the one-to-one assignment problem (maximum matching in bipartite graphs) with the input $P' = \{ p^j \;|\; j \in [1..k-1], p \in P\}$ and~$C$. 
As mentioned in~\S~\ref{subsec:complexity_measure}, the assignment problem can be solved exactly using the Hungarian algorithm in time $\mathcal{O}(|P'|^2 \cdot |C|) = \mathcal{O}(k n^2 \cdot \max\{ \avgN, \log n \}) = \mathcal{O}(k n^3)$, which is already prohibitive for large systems, 
see \eg{}~\cite{cui2016solving}.

We note that setting the original weights (in the hypergraph) to the cost saving does not follow this paradigm and even though finding the maximum matching of the pairs do provide a hypergraph matching, it is not guaranteed any longer to be maximal in terms of the sum of the weights of the hyperedges. 
Such an example is given in Fig.~\ref{fig:matchings}(f) where $M = \{ \{p_1,c_1,c_2\}, \{p_3,c_3\} \}$ maximizes $\sum_{(p,c) \in P \times C \,|\, \exists e \in M, \{p,c\} \in e} w(\{p,c\}) = 360$ but $w(M) = 330$ and thus $M$ is not maximum, cf. the matching of Fig.~\ref{fig:matchings}(d) \updated{having a weight of $350$.}

\subsubsection{Approximation for the GPM problem} 

As shown in \S~\ref{subsec:complexity_measure}, in general the GPM problem is intractable, however, we show thereafter that if the weights of the hyperedges can be approximated by using the sum of the pairwise weights, then it is possible to obtain a (polynomial-time) solution to the GPM within a bounded-approximation of the optimal.

\begin{ourClaim} \label{prop:more_general_result}
Assume that for every $p \in P$ and $X \subseteq C$ with $1 \leq |X| \leq k-1$ so that $\forall c \in X, \{p,c\} \in E$, we have
$$\alpha_1(k) \cdot \sum_{c \in X} w(\{p,c\})  \leq w(\{p\} \cup X) \leq \alpha_2(k) \cdot \sum_{c \in X} w(\{p,c\}),$$
and let $A$ be an approximation algorithm for the one-to-many assignment problem with approximation ratio $r$, then $A$ provides an $r\cdot \alpha_2(k) / \alpha_1(k)$-approximation for the $(k,\Delta)-$GPM problem.
\end{ourClaim}

\begin{proof}
Let's suppose we use the construction explained in \S~\ref{subsec:assignment_pb} to build a one-to-many assignment problem, i.e., a bipartite graph $G_{\hbox{p}} = (P_k \cup C, E_\Delta)$ with $P_k = \{p^j \;|\; p \in P, 1 \leq j \leq k-1\}$ from a given $(k,\Delta)-$GPM instance over prosumer set $P$ and consumer set $C$.





Let $\mathcal{M}_{opt}$ be an optimal solution for the $(k,\Delta)$-GPM problem and $M_{opt}$ be an optimal solution for the one-to-one assignment problem in $G_{\hbox{p}}$. 
Let $\mathcal{M}_A$ be the hypergraph matching obtained by merging together all pairs sharing a copy of a vertex $p$ in the matching $M = A(G_{\hbox{p}})$ obtained by executing $A$ over the input $G_{\hbox{p}}$, the graph having been constructed in a way so that $\mathcal{M}_A$ is an answer to the $(k,\Delta)$-GPM problem. Reversely, let 
$M_{p}$ be the matching of the pairs obtained by breaking the hyperedges of $\mathcal{M}_{opt}$, thus forming edges from $G_{\hbox{p}}$.

Because algorithm $A$ has approximation ratio $r$, we have 
$$w(M) \geq \frac{w(M_{opt})}{r}.$$
Moreover, by the proposition's initial assumption, we have (using the lower-bound side) $$\alpha_1(k) \cdot w(M) \leq w(\mathcal{M}_A)$$ and (using the upper-bound side) $$w(\mathcal{M}_{opt}) \leq \alpha_2(k) \cdot w(M_{p}).$$ 
Since $w(\mathcal{M}_{opt}) \geq w(\mathcal{M}_A)$ and $w(M_{opt}) \geq w(M_{p})$ by their optimality property, we obtain
\begin{align*}
w(\mathcal{M}_A) &\geq \alpha_1(k) \cdot w(M) \\
&\geq \alpha_1(k) \cdot \frac{w(M_{opt})}{r} \\
&\geq \alpha_1(k) \cdot \frac{w(M_p)}{r} \\
&\geq \frac{\alpha_1(k)}{\alpha_2(k) \cdot r} \cdot w(\mathcal{M}_{opt}).
\end{align*}

Hence, $A$ provides an $r\cdot \alpha_2(k) / \alpha_1(k)$-approximation for the $(k,\Delta)$-GPM problem.
\end{proof}

\subsection{Pairwise Cost-saving Weights as Approximation for GPM Weights} \label{subsec:approx_weights}

Following the construction presented in \S~\ref{subsec:assignment_pb}, one can use algorithms to solve the one-to-one assignment problem to build a hypergraph matching and hence a partition of consumers and prosumers into independent energy communities. However, the partition will not lead in this case to the global maximum in terms of weight as shown in the example of Fig.~\ref{fig:matchings}(f). Nevertheless, one can bound how far from the optimal hypergraph matching the constructed matching is by building on the following proposition (cf. Proposition~\ref{prop:optimal_hypermatching}).   
%
%
\updated{In the following, the hyperedge weights correspond to the ones calculated based on cost savings following Definition~\ref{def:cost_savings}, \ie{} $w_t(G) = \var{gain}(G, [t-\tau, t])$ for a certain $\tau$ (e.g. one month); $t$ is omitted in the following. Neighborhoods are defined in Definition~\ref{def:neighborhoods}.}


\begin{ourClaim} \label{prop:ratio_bounds}
For every prosumer $p \in P$ and group $X \subseteq C$ made only of consumers, with $1 \leq |X| \leq k-1$, and so that all consumers of $X$ are located within distance $\Delta$ from $p$, \ie{} $\forall c \in X, c \in N_\Delta(p)$, we have
$$\frac{1}{|X|} \sum_{c \in X} w(\{p,c\})  \leq w(\{p\} \cup X) \leq \sum_{c \in X} w(\{p,c\})$$
where $w$ corresponds to cost-saving weights according to Definition~\ref{def:cost_savings}, \ie{} the financial gain of using P2P energy sharing for the community $G = \{p\} \cup X$.
\end{ourClaim}

\begin{proof} \updated{For ease of notation, let $w_X = \sum_{c \in X} w(\{p,c\})$.
\textbf{Lower bound}. In any community $G = \{p\} \cup X$, it is always possible to ignore some of the consumers upon optimizing for the community's cost (\ie{} adding more consumers to a community can only produce more benefits and not less).} Hence, we have $w(G) \geq w(\{p,c\})$ for any $c \in X$, that is $w(G) \geq \max_{c \in X} w(\{p,c\})$. Since $w_X \leq |X| \cdot \max_{c \in X} w(\{p,c\})$, we get $w(G) \geq w_X / |X| $.

\textbf{Upper bound}. The intuition behind the upper bound comes from the infeasibility to produce more cost saving as a group than the sum of each consumer individually working with the same prosumer $p$.
\updated{If it were possible, one would} be able to re-create an individual strategy for one of the consumers $c$ that would then beat the optimal individual strategy that produces a gain of $w(\{p,c\})$.
Essentially, each gain obtained by the community as a whole is either due to an efficient usage of $p$'s battery (irrelevantly of the consumers associated with $p$) or to a price difference between the price paid by using $p$'s local energy (either from the same hour or from $p$'s battery) and the current centralized grid's price.

\updated{Suppose one has an allocation $A$ of $p$'s resources over time in order to maximize the cost saving $w(G)$ for the community $G$, and let's note $w^A(G)$ the cost saving obtained following such allocation, \ie{} $w^A(G) \geq w^{A'}(G)$ for any other allocation $A'$ of $p$'s resources. $A$} provides, for each timestep, $p$'s decisions upon the quantity of energy to transfer into/from its battery system, and by consequence how much energy is bought/sold from/to the grid by the group while covering $X$'s aggregated consumptions. Hence, for any $c \in X$, one can deduce from $A$ an ``individual strategy'' $A_c$ following $A$ and counter-acting the energy consumption $\ell$ of $X \setminus \{c\}$ by selling $\ell$ kWh to the rest of the grid. 
Intuitively, the cost saving $w^{A_c}(\{p,c\})$ associated with the $A_c$ allocation cannot beat the best individual allocation for $c$ cooperating solely with $p$ as a community of size $2$, that is $w^{A_c}(\{p,c\}) \leq w(\{p,c\})$. \updated{We formally prove the last statement hereafter.}


Let's have a more focused look first at how the cost savings are calculated (\ie{} base cost minus community cost, over the time period $\mathcal{T}$):
\begin{equation}\label{eq:cost_G}
    w^A(G) = \sum_{x \in G} \var{bill}(x, \mathcal{T}) - \var{bill}_A(G, \mathcal{T}),
\end{equation}
with $\var{bill}_A(G, \mathcal{T})$ being the cost over $\mathcal{T}$ under $A$.
Following the strategy $A$, denote $\bm{bat}_A(p,t)$ the battery of $p$ at time $t$, $\bm{el_{in}}^A(t)$ the energy bought from the grid at time $t$ and $\bm{el_{out}}^A(t)$ the energy sold during the same hour.
Since we assume $A$ is a valid strategy, we have that $\var{el}_{cons}(h,t)$ for $h \in G$ is balanced by electricity coming from either $p$'s battery, the grid or $p$'s local production. Hence, noting $\bm{bat}_A(t) = \bm{bat}_A(p,t)-\bm{bat}_A(p,t-1)$ and $\bm{grid}_A(t) = \bm{el_{in}}^A(t) - \bm{el_{out}}^A(t)$, we have:
$$\sum_{h \in G} \var{el}_{cons}(h,t) = \bm{bat}_A(t) + \var{el}_{gen}(p,t) + \bm{grid}_A(t).$$


\paragraph*{Individual allocation of the community's benefits}

We can now make the consumer allocated energy strategies under $A$ explicit as follows. 
Order arbitrarily first the members of $G = \{p, c_1, ..., c_{k-1}\}$ and process the $c_i$'s in the same order (hereafter consider ``$p = c_0$''):
\begin{enumerate}
    \item Use in priority the energy from $\var{el}_{gen}(p,t)$;
    \item If $\var{el}_{cons}(c_i,t)$ is not yet covered by 1 and if there is still energy in the battery pool, i.e. $\bm{bat}_A(t) > 0$, then use it;
    \item If $\var{el}_{cons}(c_i,t)$ is not yet covered by 1 \& 2, then use energy from $\bm{grid}_A(t)$.
\end{enumerate}

With such explicit allocations, we can compute ``individual cost savings'', \updated{that is reusing the $w$-notation we set} $w^A(c_i) = \sum_{t \in \mathcal{T}} w^A_t(c_i)$ is the cost saving that we can attribute to $c_i$ and calculated as follows. We set $\bm{el_{out}}^A(c_i,t) = 0$ if $i > 0$, $\bm{el_{out}}^A(c_0,t) = \bm{el_{out}}^A(t)$ and $\bm{el_{in}}^A(c_i,t) = \bm{grid}_A(c_i,t)$ where $\bm{grid}_A(c_i,t)$ is the amount of energy drawn from $\bm{grid}_A(t)$ by the above allocation (potentially $0$ if $c_i$ only needed solar panel and/or battery energy during that timestep). Individual cost saving is then obtained as:
$$w^A_t(c_i) = \var{cost}(c_i,t) - \var{cost}_A(c_i,t)$$
where $\var{cost}(c_i,t)$ is the usual cost (no community, \updated{cf.~\S~\ref{subsec:LPsolver}}) and $\var{cost}_A(c_i,t)$ is $c_i$'s cost using $\bm{el_{in}}^A(c_i,t)$ and $\bm{el_{out}}^A(c_i,t)$ in the cost calculation \updated{(eq.~\ref{eq:cost})}.
%
Now, let's observe that we have from developping equation~\ref{eq:cost_G}:
\begin{align*}
    w^A(G) &= \left(\sum_{h \in G} \var{bill}(h, \mathcal{T})\right)- \var{bill}_A(G, \mathcal{T}) \\
    &= \sum_{0 \leq i \leq k-1} w^A(c_i).
\end{align*}
This holds because $$\var{bill}_A(G, \mathcal{T}) = \sum_{h \in G} \var{cost}_A(h) = \sum_{h \in G} \sum_{t \in \mathcal{T}} \var{cost}_A(h,t),$$ and irreverently on how the energy is locally allocated in each timestep, aggregated together, they will account for all $G$'s cost.

Now, let's show that $w^A(p)$ is negative, that is $p$'s contribution under our formulation is decreasing the total cost saving $w^A(G)$. 
In our allocation of resources, $p$ has priority on its battery and PV panels regardless if it is part of a community or alone.  
However, if $p$ were left alone, some $\var{el}_{gen}(p,t)$ or $\bm{bat}_A(t)$ could have been sold to the grid or stored instead of being used by the other peers forming the community $G$, which could eventually only decrease $p$'s cost (and cannot increase it by any mean).

Now, recall $w(\{p,c\})$ is the optimal cost saving for the pair $\{p,c\}$. Consider the strategy $A_c$ for the pair $\{p,c\}$ that follows $A$ except it sells to the grid the energy that was originally intended to cover the other consumers $h \in G \setminus \{p,c\}$. Since $A_c$ is not optimal in regards to the pair $\{p,c\}$, $w^{A_c}(\{p,c\}) \leq w(\{p,c\})$. Obviously thanks to the additional benefits coming from selling extra electricity, the cost under $A_c$ is strictly less than $\var{cost}_A(h)$ entailing $w^A(c) \leq w^{A_c} (\{p,c\})$. However, $A_c$ being a valid strategy for the pair $\{p,c\}$, one also gets that $w^{A_c}(\{p,c\}) \leq w(\{p,c\})$ which 
entails $w^A(c) \leq w(\{p,c\})$.


\updated{
}
\updated{We can now conclude our proof by putting together the last two claims:
\begin{align*}
    w^A(G) &= \sum_{0 \leq i \leq k-1} w^A(c_i) = w^A(p) + \sum_{1 \leq i \leq k-1} w^A(c_i) \\
    &\leq \sum_{1 \leq i \leq k-1} w(\{p, c_i\}). 
\end{align*}
\vspace{-0.1cm}
}
\end{proof}

In our evaluation, we compare the bounds shown in the Proposition~\ref{prop:ratio_bounds} with the ones computed in practical instances in Fig.~\ref{fig:ratios} and conclude that the theoretical bounds shown here are close to the ones observed in practice. \updated{At last, let us derive from the previous proposition an approximation bound for the GPM problem by using the optimal pairwise weights as approximation for the hyperedge weights.
Using in Proposition~\ref{prop:more_general_result} the optimal assignment over $G_{\hbox{p}} = (P_k \cup C, E_\Delta)$ that can be obtained in $\mathcal{O}(kn^3)$ with the Hungarian algorithm, and considering the bounds provided by Proposition~\ref{prop:ratio_bounds}, i.e., $\alpha_1(k) = 1/(k-1)$ and $\alpha_2(k) = 1$,
we obtain the following result.
}


\begin{corollary} \label{prop:optimal_hypermatching}
For cost-saving weights, it is possible to compute a $(k-1)$-approximate hypergraph matching solution to the $(k,\Delta)-$GPM in $\mathcal{O}(kn^3)$ time.
\end{corollary}

\section{Efficient Peer Matching Algorithms} \label{sec:algorithms}


\updated{We describe in this section algorithms that produce a solution (\ie{} a hypergraph matching) to the GPM problem with the input of the matching problem being a bipartite graph based on end-users location (as defined in~\S~\ref{subsec:pb_abstraction}).} 
We first present three algorithms: \textit{Round Robin}, \textit{Single Pass} and \textit{Classic Greedy} in \S~\ref{sec:algo_templates}, then explain in \S~\ref{subsec:weights} how to instantiate them using different weight functions.
\updated{We analyze here the time-complexity of the introduced matching procedures}; we note the weight function is constant in this analysis.
Dealing with the latter, for each algorithm and weight function, we provide an asymptotic analysis of the number of weight computations in order to produce the solution in \S~\ref{subsec:weights}.
\updated{Recall that following \S~\ref{subsec:peer_matching}, in order to} reduce the search space, we have restrained hyperedges to contain exactly one member of the set $P$ (a prosumer in our context) but the algorithms can be easily adapted if this constraint is lifted.

\subsection{Peer Matching Algorithms} \label{sec:algo_templates}

\subsubsection{Definitions} \label{defitinions}

Let $P \cup C$ be the input of a GPM problem of parameter $(k,\Delta)$. We assume that \textsf{weight}$(p_i, c_j, M_i, t)$ returns a weight associated to node $p_i \in P$ and node $c_j \in C$ and possibly using a partially computed set of nodes $M_i \subset C$ that are already associated with $p_i$, whereas $t$ indicates the current time-step (recall, weights are time-dependent). This function can be resolved by either:
\begin{enumerate}[leftmargin=*]
    \item executing a computationally-expensive procedure that relies on solving one or several LP-programs as defined in~\S~\ref{subsec:LPsolver} and~\ref{subsec:community};
    \item simply performing a look-up of a previously computed weight.
\end{enumerate}

When a weight is obtained in the first case above, we say it is \textit{computed}; such computation is the bottleneck of the matching procedures and is highlighted in the pseudocodes of the algorithms.
The input of the matching algorithms is the bipartite graph $H_\Delta = (P \cup C, E_\Delta)$ where $E_\Delta$ captures all neighborhoods at geographic distance $\Delta$ from each prosumer, \ie{} $E_\Delta = \{ (i,j) \in P \times C \;|\; \textsf{dist}(p_i, c_j) \leq \Delta \}$; an example of $E_\Delta$ restraining possible hyperedges is illustrated in Fig.~\ref{fig:matchings}(a)-(b). 
We highlight in the following remark that the presented  \textit{Peer Matching} algorithms are more general than the GPM problem. 

\begin{remark} \label{remark:gen_algo}
Note that Algorithms~\ref{alg:matching1},~\ref{alg:matching2} and~\ref{alg:matching3} do not rely on the geographical positions of the nodes but on the authorised matching pairs captured by $E_\Delta$, hence any bipartite graph can be given as input and the algorithms can be applied to a wider range of problems. 
\end{remark}

The function $\textsf{order}(P)$ sorts the set $P$ according to a predefined ordering, to be provided by the user (cf~\S~\ref{subsec:exp_setup} for the ones used in our evaluation).
As an example, the matchings obtained by the three algorithms are displayed in Fig.~\ref{fig:matchings}(d)-(f) when using the pairwise weights of Fig.~\ref{fig:matchings}(c) with processing order $p_1, p_2, p_3$.


\begin{algorithm}[t]
\footnotesize
\SetAlgoLined
\SetKwInOut{Input}{Input}
\SetKwInOut{Output}{Output}
\SetKwInOut{Upon}{Event}

\Input{
A bipartite graph $H_\Delta = (P \cup C, E_\Delta)$ and $k \geq 2$ 
}
\Output{$M$, a $k$-bounded bipartite hypergraph matching\;}
    \tcp{\color{comment-color}Initialization}
    \ForEach{$i \in P$}{ \label{alg1:line1}
        $M_i \leftarrow \emptyset$ \;
    }
    \ForEach{$j \in C$}{ 
        $S_j \leftarrow $ False \;
    }
    $\Psi \leftarrow \textsf{order}(P)$ \; \label{alg1:line8}
    
    \While{$\Psi \not= \emptyset$}{ \label{alg:while} 
        \ForEach{$i \in \Psi$}{ \label{alg:loopstart}
            $N \leftarrow \{ j \in C \;|\; \{i,j\} \in E_\Delta \land \lnot S_j \}$ \;
            \eIf{$N = \emptyset \lor |M_i| = k-1$}{
              $\Psi \leftarrow \Psi \setminus \{i\}$ \;\label{alg:prosumer_removed}
            }
            {
                \eIf{$|N| > 1$}{
                    \ForEach{$j \in N$}{
                        \colorbox{red!10}{$b_j \leftarrow$ \textsf{weight}$_t(p_i, c_j, M_i)$\;}
                    }
                    $\ell \leftarrow \argmax_{j \in N} \; b_{j}$ \;
                }
                {
                    $\ell \leftarrow N[1]$
                }
                
                $S_{\ell} \leftarrow $ True \;
                $M_i \leftarrow M_i \cup \{\ell\}$\;
            }
        }
    }    
\Return $\{M_i \;|\; i \in P\}$\;
\caption{Round Robin Matching Procedure}
\label{alg:matching1}
\end{algorithm}

\subsubsection{Round robin matching}


Algorithm~\ref{alg:matching1} builds a bipartite hypergraph matching by affecting to each prosumer one consumer at a time in a round-robin fashion. 
If no consumer can be affected to $p$ (either because $p$ has already $k-1$ consumers affected or no unmatched consumer is found within distance $\Delta$ of $p$), $p$ is skipped (and discarded from subsequent matching attempts).
Hence, all prosumers get at most 1 consumer each before a second iteration starts in affecting a second consumer to every prosumer.
When a prosumer has a choice of which consumer to pick, it greedily selects the consumer with highest weight; how the weights are settled is further described in \S~\ref{subsec:weights}.

\subsubsection{Single pass matching}


\begin{algorithm}[t]
\footnotesize
\SetAlgoLined
\SetKwInOut{Input}{Input/Output}
\Input{
as in Algorithm~\ref{alg:matching1}.
}
\tcp{\color{comment-color}Same as lines \ref{alg1:line1}-\ref{alg1:line8} in Algorithm~\ref{alg:matching1}}
    \ForEach{$i \in P$}{ 
        $N \leftarrow \{ j \in C \;|\; \{i,j\} \in E_\Delta \land \lnot S_j \}$ \;
        \eIf{$|N| \geq k$}{
            \ForEach{$j \in N$}{
            \colorbox{red!10}{$b_j \leftarrow$ \textsf{weight}$_t(p_i, c_j, M_i)$\; \label{alg2:weights}}
        }
        \While{$|M_i| < k-1$}{ \label{algo2:pick1}
            $\ell \leftarrow \argmax_{j \in N} \; b_{j}$ \;
            
            $M_i \leftarrow M_i \cup \{\ell\}$\; \label{algo2:pick2}
        }
        }{
            $M_i \leftarrow N$\;
            
        }
        
        \ForEach{$j \in M_i$}{
            $S_{j} \leftarrow $ True \;
        }
        
    }    
\Return $\{M_i \;|\; i \in P\}$\;
\caption{Single-Pass Matching Procedure}
\label{alg:matching2}
\end{algorithm}

Contrary to the previous one, Algorithm~\ref{alg:matching2} builds a matching in a single pass over the prosumers. For each prosumer $p$, the $k-1$ best available consumers (in terms of weight) are matched with $p$. 

\subsubsection{Classic greedy matching}

\begin{algorithm}[b]
\footnotesize
\SetAlgoLined
\SetKwInOut{Input}{Input/Output}
\Input{
as in Algorithm~\ref{alg:matching1}.
}
\tcp{\color{comment-color}Same as lines \ref{alg1:line1}-\ref{alg1:line8} in Algorithm~\ref{alg:matching1}}
    
    \ForEach{$\{i,j\} \in E_\Delta$}{
        \colorbox{red!10}{$b_{i,j} \leftarrow$ \textsf{weight}$_t(p_i, c_j, M_i)$\;}
    }
    \tcp{\color{comment-color}Sort all possible matching pairs by decreasing weights}
    $B \leftarrow \textsf{decreasing\_sort}( \{ b_{i,j} \;|\; \{i,j\} \in E_\Delta \} )$ \;
    \ForEach{$b_j \in B$}{
        \If{$|M_i| < k-1 \land \lnot S_j$}{
            $S_{j} \leftarrow $ True \;
            $M_i \leftarrow M_i \cup \{j\}$\;
        }
    }    
\Return $\{M_i \;|\; i \in P\}$\;
\caption{Classic Greedy Matching Procedure}
\label{alg:matching3}
\end{algorithm}

Algorithm~\ref{alg:matching3} is the ``classic'' greedy procedure for solving the assignment problem (cf.~\S~\ref{subsec:assignment_pb}) that, based on pre-computing all pairwise weights, sorts the prosumer-consumer pairs $(p,c) \in E_\Delta$ from highest to lowest, then associates consumers to prosumer whenever possible ($c$ not already matched and $p$ having less than $k-1$ affected consumers).
\updated{As it is well-known that the greedy algorithm provides a $2$-approximation to the assignment problem~\cite{duvignau2023greediness}, Propositions~\ref{prop:ratio_bounds} and~\ref{prop:more_general_result} entail an approximation bound of $2(k-1)$ on the quality of the output of the algorithm.}

\begin{corollary} \label{proposition_greedy} 
Algorithm~\ref{alg:matching3} provides a $2(k-1)$-approximation for the $(k,\Delta)-$GPM problem.
\end{corollary}

\subsubsection{Time complexity} \label{subsec:timecomplexity} 
We analyze here the time complexity of the three introduced matching algorithms, assuming a constant time for each weight calculation.  

\begin{ourClaim} \label{proposition123} 
Algorithm~\ref{alg:matching1},~\ref{alg:matching2} and~\ref{alg:matching3} run in respectively $\mathcal{O} \left( mk \right)$,  $\mathcal{O} \left( m \right)$ and $\mathcal{O} \left( m \log{m} \right)$ time.
\end{ourClaim}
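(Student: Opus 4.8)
The plan is to bound each algorithm's running time by amortizing the per-prosumer work against the prosumer degrees $\deg(p_i) = |N_\Delta(p_i)|$, using the identity $\sum_{i \in P} \deg(p_i) = |E_\Delta| = m$ together with the stated assumption that in this analysis every \textsf{weight} call (computation or look-up) costs $\mathcal{O}(1)$. I would first fix the cost of the primitives: building the available-neighbor set $N$ of a prosumer $p_i$ scans its adjacency list in $E_\Delta$ and filters on the flags $S_j$, costing $\mathcal{O}(\deg(p_i))$; each weight evaluation, each flag update $S_j$ and each insertion into $M_i$ is $\mathcal{O}(1)$; and the initialization of the $M_i$, the $S_j$, plus applying the predefined \textsf{order} is $\mathcal{O}(n + |C|)$, which is subsumed by the main loops (isolated vertices can be discarded in linear time, so we may assume $n, |C| = \mathcal{O}(m)$).

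For Algorithm~\ref{alg:matching1} (Round Robin) the key step is to bound how many times each prosumer is touched. Every execution of the inner body for a prosumer $i$ either assigns one consumer to $M_i$ or deletes $i$ from $\Psi$; since $|M_i|$ increases by one per assignment and is capped at $k-1$, and the deletion happens once, each prosumer is processed at most $k$ times over the whole run. Each such processing costs $\mathcal{O}(\deg(p_i))$ (recomputing $N$, evaluating the $|N| \le \deg(p_i)$ weights, and taking the $\argmax$), so the total is $\sum_{i \in P} k \cdot \mathcal{O}(\deg(p_i)) = k \cdot \mathcal{O}(m) = \mathcal{O}(mk)$. I would emphasize that the extra factor $k$ arises precisely because the neighborhood is re-scanned once per round.

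For Algorithm~\ref{alg:matching2} (Single Pass) each prosumer is visited exactly once, so its neighborhood is scanned and its $|N|$ weights are computed only once, at cost $\mathcal{O}(\deg(p_i))$. The one delicate point, which I expect to be the main obstacle, is the extraction of the $k-1$ largest weights: performed as $k-1$ successive $\argmax$ passes it would cost $\mathcal{O}(k\,\deg(p_i))$ and collapse the bound back to that of Algorithm~\ref{alg:matching1}. I would therefore argue that the top $k-1$ entries can be selected in $\mathcal{O}(|N|)$ time by a linear-time selection procedure, giving $\mathcal{O}(\deg(p_i))$ per prosumer and $\sum_{i \in P} \mathcal{O}(\deg(p_i)) = \mathcal{O}(m)$ overall; this is exactly why the single-pass variant is asymptotically cheaper.

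For Algorithm~\ref{alg:matching3} (Classic Greedy) I would simply sum the three phases. Computing all $|E_\Delta| = m$ pairwise weights is $\mathcal{O}(m)$; sorting the $m$ weights in decreasing order is $\mathcal{O}(m \log m)$; and the final single sweep over the sorted list, with $\mathcal{O}(1)$ work per pair (the tests $|M_i| < k-1$ and $\lnot S_j$ together with the updates), is $\mathcal{O}(m)$. The sorting term dominates, yielding $\mathcal{O}(m \log m)$, which also absorbs the $\mathcal{O}(n \log n)$ ordering cost since $m \ge n$.
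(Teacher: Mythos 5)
Your proof is correct and takes essentially the same route as the paper's: the factor-$k$ accounting for Round Robin (the paper bounds the outer while-loop by $k$ iterations of cost $\mathcal{O}(|P|+|E_\Delta|)$ each, which is the same count as your per-prosumer amortization), the replacement of the repeated $\argmax$ in Single Pass by a linear-time selection/partition step to get $\mathcal{O}(m)$ --- exactly the point the paper's proof also invokes --- and the sort-dominated $\mathcal{O}(m \log m)$ bound for Classic Greedy. No gaps; your explicit remark on why naive top-$(k-1)$ extraction would degrade Single Pass to $\mathcal{O}(mk)$ is a useful clarification of what the paper states only implicitly.
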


\begin{proof} 
\textit{Algo.~\ref{alg:matching1}}: the main while-loop is executed at most $k$ times (after what all prosumers have been removed by line~\ref{alg:prosumer_removed}) and each inner for-loop goes through $|P|$ prosumers, each time calculating the best local choice using $\mathcal{O}(|N_\Delta(p)|)$ for prosumer $p$.
The inner for-loop thus takes total time $\mathcal{O}(|P| + |E_\Delta|) = \mathcal{O}(\avgN |P|)$.

\textit{Algo.~\ref{alg:matching2}}: line~\ref{alg2:weights} is executed $\mathcal{O}(|E_\Delta|)$ times whereas lines~\ref{algo2:pick1}-\ref{algo2:pick2} also take $\mathcal{O}(m)$ in total when using a selection and partition algorithm to find the $k-1$ highest unsorted weights in each neighborhood. 

\textit{Algo.~\ref{alg:matching3}}: this is the time needed to sort all weights, then going through the sorted list takes $\mathcal{O} \left( |P| \avgN \right)$ time.
\end{proof}

\subsection{Instantiation of the Weight Function} \label{subsec:weights}

We define here two ways to calculate the weights: either in a ``memoryful'' fashion, taking into account previous choices made by the matching algorithm, or in a ``memoryless'' one with constant weights only depending on the involved prosumer and consumer pair $(p_i,c_j)$ being examined.
For each variant, we propose two ways to calculate the weight for a given pair, either based on the energy cost produced by the pair working as a community in the recent past $[t-\tau,t]$ (running a single LP-solver as described in \S~\ref{subsec:LPsolver} and~\ref{subsec:community}) or based on the cost-saving produced by the pair over the same period $[t-\tau,t]$.
In the former case, the goal of the GPM problem being to minimize the total cost, we inverse the weights to keep a maximum-weight problem.
In all, four weight functions are defined as $\var{WX}_t(i,j)$ to instantiate the call to $\textsf{weight}_t(p_i, c_j, M_i)$.

\subsubsection{Memoryless weights}


Using memoryless weights (WA and WB) means that the weights are constant and independent of the matching procedure being run. This also means that building the matching is nothing more than solving the classic assignment problem (as described in~\S~\ref{subsec:assignment_pb}). An example of memoryless weights is given in Fig.~\ref{fig:matchings}(c) where a weight is given for each pair $(p,c) \in P \times C$. We set the memoryless weights as follows:
\begin{itemize}[topsep=0pt, leftmargin=*]
    \item[] $ \var{WA}_t(i,j) = -\var{bill}(\{p_i,c_j\},[t-\tau,t]); $
    \item[] $ \var{WB}_t(i,j) = \sum_{x \in \{p_i,c_j\}} \var{bill}(\{x\},[t-\tau,t])) + \var{WA}_t(i,j).$
\end{itemize}

\begin{ourClaim} \label{proposition4} 
Algorithms~\ref{alg:matching1},~\ref{alg:matching2} and~\ref{alg:matching3} need to compute $\mathcal{O}(m)$ memoryless weights to solve the $(k,\Delta)-$GPM problem.
\end{ourClaim}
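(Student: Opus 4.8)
The plan is to exploit the single defining feature of the memoryless weights: both $\var{WA}_t(i,j)$ and $\var{WB}_t(i,j)$ depend only on the pair $(p_i,c_j)$, and in particular not on the partial assignment $M_i$ nor on any internal state of the matching procedure. By the two-case resolution of the function $\textsf{weight}_t$ described in \S~\ref{defitinions}, the first time the weight of an edge $\{i,j\} \in E_\Delta$ is requested it is \emph{computed} via the expensive LP solve, whereas every later request for the same pair returns the identical value and is therefore resolved by a mere \emph{look-up}. Consequently, the quantity to bound is not the number of calls to $\textsf{weight}_t$ but the number of \emph{distinct} edges whose weight is ever genuinely computed, which is trivially at most $|E_\Delta| = m$.

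First I would dispatch the two easy cases. Classic Greedy (Algorithm~\ref{alg:matching3}) computes each edge weight exactly once in its initial pass over $E_\Delta$, giving precisely $m$ computations. For Single Pass (Algorithm~\ref{alg:matching2}), each prosumer $i \in P$ is processed exactly once, and during that processing weights are requested only for consumers in its current neighborhood $N \subseteq N_\Delta(p_i)$; hence each edge $\{i,j\}$ is touched at most once, and summing over prosumers bounds the computations by $\sum_{i \in P} |N_\Delta(p_i) \cap C| = |E_\Delta| = m$.

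The main obstacle is Round Robin (Algorithm~\ref{alg:matching1}), whose outer while-loop revisits each surviving prosumer up to $k$ times; read naively this suggests up to $k$ recomputations per edge and only the weaker bound $\mathcal{O}(mk)$. Here the memoryless property is essential and must be invoked explicitly: because the value attached to $\{i,j\}$ is independent of $M_i$, the weight computed in the first round remains valid in every subsequent round, so each of the $\leq m$ edges contributes at most one genuine LP solve while all further accesses degrade to look-ups. Combining the three cases yields the claimed $\mathcal{O}(m)$ bound; I would expect the writeup to spend essentially all of its effort on making the caching argument for Round Robin watertight, since for Classic Greedy and Single Pass the bound follows immediately from a single traversal.
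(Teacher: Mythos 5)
Your proposal is correct and follows essentially the same argument as the paper's proof: Single Pass and Classic Greedy request each edge weight at most once, and for Round Robin the memoryless property combined with caching (``remembering previously computed weights'') ensures each of the at most $m$ edges triggers only one genuine computation, all later accesses being look-ups. Your writeup is simply a more detailed rendering of the paper's two-sentence argument.
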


\begin{proof} 
Note Algorithms~\ref{alg:matching2} and~\ref{alg:matching3} only call the weight routine once per prosumer-consumer pair, hence $\mathcal{O}(|E_\Delta|)$ weights are ever calculated.
By remembering previously computed weights, Algorithm~\ref{alg:matching1} needs also to compute each weight only once since the current matching is not involved when using memoryless weights. 
\end{proof}

Regarding the minimum number of weights to compute, the three algorithms differ. Algorithm~\ref{alg:matching3} needs to sort all the weights, so $|E_\Delta|$ weights are always computed regardless how the nodes are distributed in space.
However, for Algorithm~\ref{alg:matching1} and Algorithm~\ref{alg:matching2}, there exist cases where none of $\Omega(|P|^2)$ weights are computed\footnote{One can indeed construct the following example (for Algorithm~\ref{alg:matching1}): $p_i$ is positioned at $(0,i\cdot \varepsilon)$ and $c_j$ is positioned at $(0,\Delta+j\cdot \varepsilon')$  with $0 < \varepsilon < \frac{\Delta}{|P|}$ and $\varepsilon < \varepsilon' < \frac{\varepsilon (1+ |C|)}{|C|}$.}.
%
\updated{It is naturally possible to pre-compute all the $\mathcal{O}(n^2)$ memoryless weights based on some $E_\Delta$. However, we note} that the necessary number of weights to compute may be lower when using Algorithm~\ref{alg:matching1} and Algorithm~\ref{alg:matching2}.

\subsubsection{Memoryful weights}

We define two memoryful weight functions (WC and WD) to take into account the already matched pairs, hence trying to provide the most accurate answer to the bipartite hypergraph matching problem. The weights are set as follows:

\begin{itemize}[topsep=0pt, leftmargin=*]
    \item[] $\var{WC}_t(i,j) = -\var{bill}(M_i \cup \{p_i,c_j\},[t-\tau,t]);$
    \item[] $\var{WD}_t(i,j) = \sum_{x \in M_i \cup \{p_i,c_j\}} \var{bill}(\{x\},[t-\tau,t])) + \var{WC}_t(i,j).$
\end{itemize}

\begin{ourClaim} \label{proposition5} 
Algorithms~\ref{alg:matching2} and~\ref{alg:matching3} need to compute $\mathcal{O}(m)$ weights when using memoryful weights for solving the GPM problem of parameter $(k,\Delta)$; Algorithm~\ref{alg:matching1} needs to compute $\mathcal{O}(km)$ weights.
\end{ourClaim}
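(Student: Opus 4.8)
The plan is to treat the single-pass and classic-greedy procedures (Algorithms~\ref{alg:matching2} and~\ref{alg:matching3}) together, and then isolate the round-robin procedure (Algorithm~\ref{alg:matching1}), since only the latter incurs the extra factor $k$. The guiding observation is structural: Algorithms~\ref{alg:matching2} and~\ref{alg:matching3} compute the weight of each admissible pair exactly once, before any assignment is finalised, whereas Algorithm~\ref{alg:matching1} may revisit a pair once per round because the memoryfull weight of $(p_i,c_j)$ depends on the evolving set $M_i$.

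For Algorithms~\ref{alg:matching2} and~\ref{alg:matching3} I would simply inspect the pseudocode. In Algorithm~\ref{alg:matching3} every weight $b_{i,j}$ is produced in a single loop ranging over $E_\Delta$, and the subsequent sorting and assignment phases only perform look-ups; hence exactly $|E_\Delta| = \mathcal{O}(m)$ computations occur. In Algorithm~\ref{alg:matching2} the weight loop over the available neighbourhood $N$ precedes the selection while-loop, which again only reads already-stored values; summing over prosumers gives $\mathcal{O}(\sum_{i \in P} |N_\Delta(p_i)|) = \mathcal{O}(|E_\Delta|) = \mathcal{O}(m)$ calls. Crucially, this count is unaffected by the weights being memoryfull, because each $M_i$ is still empty at the moment its pairs are evaluated, so the bound coincides with the memoryless case of Proposition~\ref{proposition4}.

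The substantive case is Algorithm~\ref{alg:matching1}. I would first recall from the time-complexity analysis (Proposition~\ref{proposition123}) that the outer while-loop runs $\mathcal{O}(k)$ times, since each surviving prosumer receives one consumer per round and is discarded once $|M_i| = k-1$. Within a single round, each still-active prosumer $p_i$ evaluates at most $|N| \leq |N_\Delta(p_i)|$ weights, so one round costs at most $\sum_{i \in P} |N_\Delta(p_i)| = \mathcal{O}(|E_\Delta|) = \mathcal{O}(m)$ computations; multiplying by the number of rounds yields the claimed $\mathcal{O}(km)$. The one point requiring care---and the main conceptual obstacle separating this from Proposition~\ref{proposition4}---is justifying why the factor $k$ cannot be removed by memoisation: since WC and WD depend on $M_i \cup \{p_i,c_j\}$ and $M_i$ gains an element between consecutive rounds, any value cached in an earlier round is stale, so every pair of a surviving prosumer must genuinely be recomputed. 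I would be careful to show that the factor $k$ arises only from the round count and not from within a round (inside a fixed round each pair is touched once), which pins the total at $\mathcal{O}(km)$ rather than anything larger.
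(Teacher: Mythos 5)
Your proof is correct and follows essentially the same route as the paper's: Algorithms~\ref{alg:matching2} and~\ref{alg:matching3} invoke the weight routine at most once per edge of $E_\Delta$, giving $\mathcal{O}(m)$, while Algorithm~\ref{alg:matching1} is bounded by its time complexity from Proposition~\ref{proposition123}, i.e., $\mathcal{O}(k)$ rounds times $\mathcal{O}(m)$ weight evaluations per round. Your added observations---that $M_i$ is still empty when Algorithms~\ref{alg:matching2} and~\ref{alg:matching3} evaluate their weights (so memoryfull weights cost no more than memoryless ones there), and that staleness of cached values prevents memoisation from removing the factor $k$ in Algorithm~\ref{alg:matching1}---are correct refinements that the paper leaves implicit.
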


\begin{proof} 
Variants C and D now depend on previous matching.
As with memoryless weights, Algorithms~\ref{alg:matching2} and~\ref{alg:matching3} only call the weight routine at most once per edge in $E_\Delta$, hence $\mathcal{O}(|E_\Delta|)$ weights are ever used.
For Algorithm~\ref{alg:matching1}, the number of weight calculation is bounded by its time complexity, cf. Proposition~\ref{proposition123}.
\end{proof}

For Algorithm~\ref{alg:matching1}, pre-computation is not needed as each time the weight function is called, a weight calculation must occur (\ie{} $M_i$ is different for each iteration).
One can observe that all the algorithms compute significantly fewer weights than the number of hyperedges, that is bounded by $\mathcal{O} \left(|P| \cdot \binom{\delta}{k-1} \right) = \mathcal{O}(n \delta^{k-1}) = \mathcal{O}(n^k)$, with $\delta$ being the maximum size of the neighborhoods, \ie{} $\delta = \max_{p \in P} |N_\Delta(p)|$.
\looseness=-1

\subsubsection{Sampling for Reducing the Search Space}

One solution to further reduce the number of weights to compute is to sample the neighborhoods, before executing the matching procedure.
We propose here two sampling methods for a given sampling size~$s$. For neighborhoods containing less than $s$ consumers, \updated{we} keep them all, otherwise for neighborhood $N$: 
\begin{itemize}[leftmargin=*]
    \item \textbf{Random}: pick uniformly at random $s$ neighbors in $N$;
    \item \textbf{Greedy}: pick the $s$ consumers with largest consumptions in $N$ (expected to produce higher cost savings than the others).
\end{itemize}

Applying such a sampling step as a pre-processing of the neighborhoods reduces the number of weights to compute to only $\mathcal{O}(ns)$.

\begin{figure*}[t]
    \centering
    \vspace{-0.1cm}
    \includegraphics[width=0.9\linewidth]{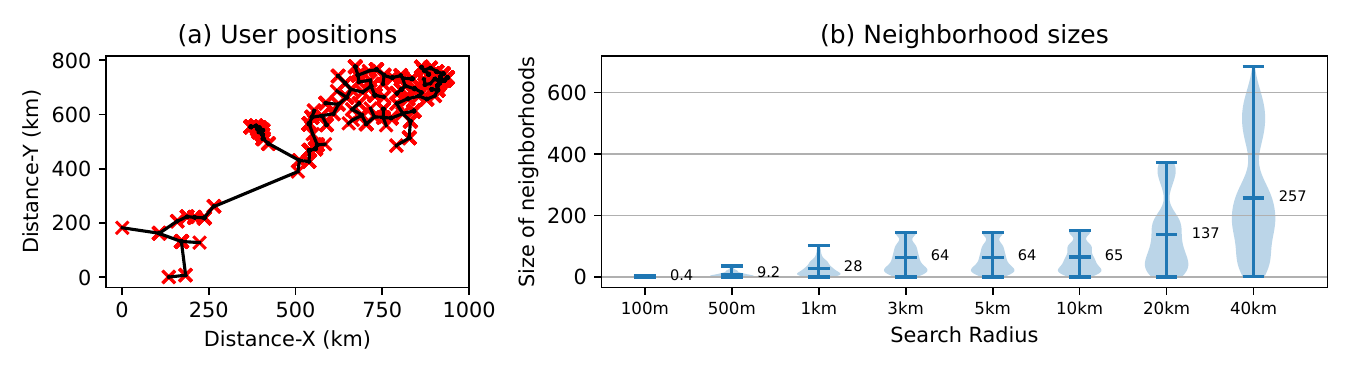}
    \caption{Dataset used in our evaluation: (a) Position of the users with an example of grid infrastructure relying them; (b) Size of the neighborhoods.}
    \label{fig:households}
\end{figure*}

\section{Experimental Study} \label{sec:evaluation}

\updated{We present in this section an experimental study comparing the performance of the proposed algorithms in terms of number of computed peer matching preferences (\ie{} hyperedge weights) versus the amount of cost saving produced by the computed matching (\ie{} the quality of the output matching).}

\subsection{Experimental Set-up} \label{subsec:exp_setup}

\subsubsection{Energy Data}
Consumption profiles of $2221$ real households are used in this study, originating from \cite{nyholm2016solar}. Each trace contains electricity consumption measures on a hourly basis for a different household for one year. 
Households are equipped with energy resources as follows: 10\% of them are prosumers with only PV panels, 10\% are prosumers with both PV panels and a battery, the rest are usual consumers. Capacities for energy resources are set for each prosumer relative to its average electricity consumption following current installed PV capacities (based on the Open PV 19 dataset~\cite{nrel2019open}).
An hourly solar profile is used here assuming same roof panel orientations~\cite{norwood2014geospatial}.
We use in our study the original city position of the households (obtained through using their respective zipcode), where we added a small random offset of up to 1km, see Fig.~\ref{fig:households}(a).
The hourly electricity price profile is based on the output from a European scale dispatch model~\cite{goransson2014linkages}.
Taxes are set to $\var{tax} = 25\%, \var{el}_{tax} = 6.9$~€ cents/kWh and $\var{el}_{net} = 0.58$~€ cents/kWh. 

\label{subsec:algo_searchradius}

\subsubsection{Algorithms} \label{exp:algo}
We compare in this study the three procedures defined in~\S~\ref{sec:algo_templates}: \textit{Round Robin}, \textit{Single Pass} and \textit{Classic Greedy} algorithms, \updated{as well as the hypergraph matching computed based on the optimal pairwise matching as described in \S~\ref{subsec:assignment_pb}}. 
For the first two, the order in which prosumers are processed can take three forms: \textit{Increasing} or \textit{Decreasing} order of their average energy consumptions, or decreasing order of their energy resources (\textit{Resource}); the latter is obtained by adding together both PV capacity (kWp) and battery capacity (kWh). 
Single Pass and Classic Greedy algorithms can be tuned to use two variants of the weight function (as explained in~\S~\ref{subsec:weights}): cost-based weights (WA) or saving-based weights (WB).
In addition to the above two variants (referred as ``memoryless weights''), the Round Robin algorithm is tested with the two memoryful counterparts: cost-based memoryful weights (WC) or saving-based memoryful weights (WD).
We \updated{hence} tested the 16 different combinations of algorithms, prosumer orders and weight functions of Table~\ref{tab:procedures} (cf. first three columns of Table~\ref{tab:algos}).
The experiments were run on a high-end server (Intel Xeon E5 2650 CPU, 64GB RAM) where computing a single weight takes about 2sec.

\begin{table}[b]
    {
    \caption{Tested parametrized procedures.}\label{tab:procedures}
    \centering\small
    \begin{tabular}{c|c|c}
        \textbf{Algorithm} & \textbf{Processing Order} & \textbf{Weight Function}  \\
        \hline
        \hline
        \multirow{3}{*}{Round Robin} & Increasing & WA, WB, WC, WD \\
        & Decreasing & WA, WB, WC, WD \\
        & Resource & WA, WB \\
        \hline
        \multirow{2}{*}{Single Pass} & Decreasing & WA, WB \\
        & Resource & WA, WB \\
        \hline
        Classic Greedy & -- & WB \\
        \hline
        Optimal Pairwise & -- & WB \\
    \end{tabular}
    }
\end{table}


\subsubsection{Search radius}
We set for the experiments 8 different search radii: 100m, 500m, 1km, 3km, 5km, 10km, 20km and 40km. 
For each search radius, Fig.~\ref{fig:households}(b) presents the distribution of the size of the nodes' neighborhoods (only based on the node positions). 
Since the search radii of 5km and 10km produce very similar neighborhood sizes as 3km, we omit them in the rest of the evaluation.

\subsubsection{Pairwise versus hyperedge weights}

\begin{figure}[t]
    \centering
    \includegraphics[width=\linewidth]{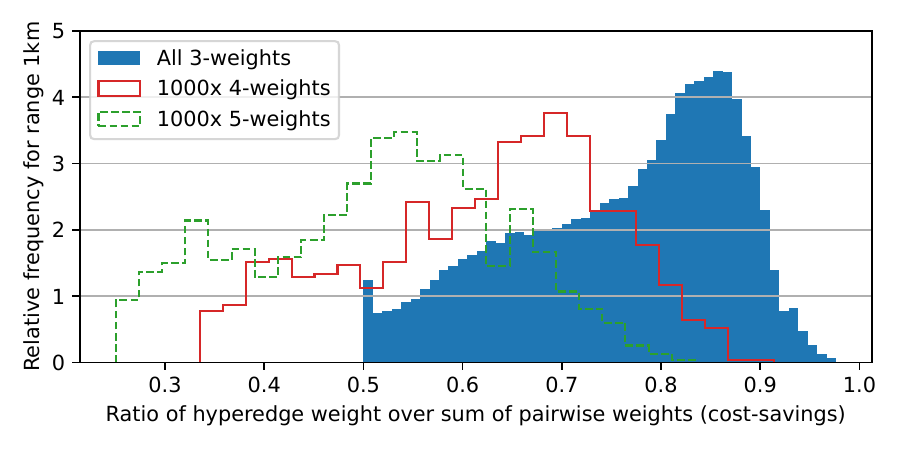}
    \caption{\updated{Comparison of hyperedge weights with the sum of pairwise weights for memoryless cost-saving weights.}}
    \label{fig:ratios}
\end{figure}

\updated{
All studied memoryless algorithms manipulate only the pairwise weights to infer their matching decisions.
Following Proposition~\ref{prop:ratio_bounds}, we know that when using cost-saving weights, we can upper- and lower-bound all hyperedge weights in relation to the sum of the pairwise weights. 
Fig.~\ref{fig:ratios} explores what is the distribution of the ratio between the hyperedge weights (the ``real weights'') and the sum of pairwise weights (the ``approximate weights'' for us as presented in \S~\ref{subsec:approx_weights}), i.e., the parameter $w(\{p\} \cup X) / \sum_{c\in X} w(\{p,c\})$. The figure illustrates how close to the theoretical bounds of $\mathbf{1 / (k-1)}$ (lower bound) and \textbf{1} (upper bound) the actual values are, when considering a range of 1km (using about 80k weights for possible triplets, 10k pairwise weights and 1000 random samples for 4- and 5-weights).}
\updated{
One may observe that (i) the bounds obtained in Proposition~\ref{prop:ratio_bounds} are very close to being tight for $k=3$, (ii) the lower-bound is quite close to the lowest found ratios and that, (iii) the ratios are in general most often between 0.5 and 0.9.
This indicates that pairwise weights provide indeed a suitable approximation for the weights of larger groups.
}

\subsection{Comparison of Peer Matching Algorithms}

We shall hereafter compare the performance of the introduced algorithms based on our two performance measures: \textit{quality of the matching} (in terms of cost saving achieved by the communities) and \textit{number of weight calculations} (main computational bottleneck) for the 16 parametrized procedures of Table~\ref{tab:procedures}.


\subsubsection{Quality of the matching}

\begin{table*}[t]
    \centering
    \setlength\tabcolsep{3.5pt}   
    \renewcommand{\arraystretch}{1.4}
    \caption{Comparison of the different algorithms, processing orders and weight functions~$\star$.}
    \label{tab:algos}
    \setlength\tabcolsep{4pt}   
\renewcommand{\arraystretch}{2}
\includegraphics[width=\textwidth]{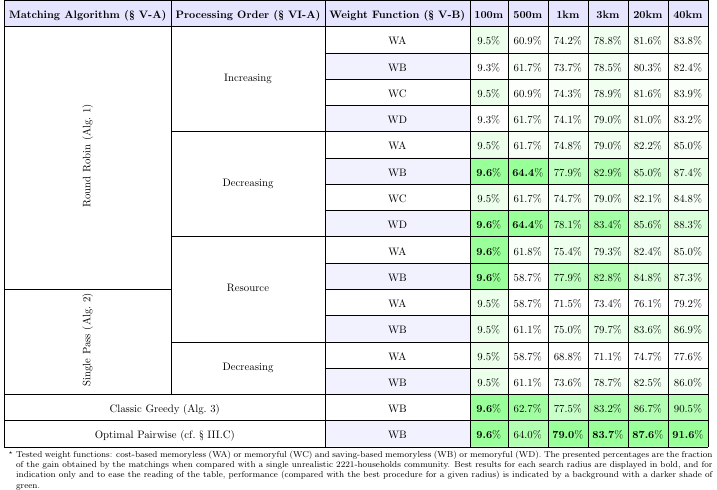}
\end{table*}

The solution obtained using the studied algorithms is summarized in relative terms in Table~\ref{tab:algos}.
The results are calculated upon matching together 445 prosumers (half of them having also a battery system) with 1776 consumers into groups of size at most $k = 5$ containing exactly one prosumer each and running the LP-solver (cf.~\S~\ref{subsec:LPsolver}) over a year of data to obtain each community's yearly cost (used as the hyperedge weight).
Recall the hyperedge weight is the basis of the \texttt{weight} function in Algorithms~\ref{alg:matching1},~\ref{alg:matching2} and~\ref{alg:matching3}, following the four different ways to calculate weights as presented in~\S~\ref{subsec:weights}. 

In our scenario, up to 150k~€ can be saved when all the nodes cooperate in a single community (equivalent of setting $\Delta = k = \infty$), from which up to 91.6\% can be recovered by using $\Delta = 40$km and $k = 5$.
In terms of quality of the solution, the Single Pass algorithm using decreasing order and cost-based weights is clearly outperformed by the rest, whereas the three best performing ones are the Optimal Pairwise, the Classic Greedy and the Round Robin algorithm using decreasing order and WD (all with cost-saving weights).
Closely behind, we note the high performance of both the Round Robin (decreasing order, WB) and the Single Pass (WB) algorithms, both computing significantly less weights than the algorithms that require the full pairwise weight list (Optimal Pairwise and Classic Greedy).

\subsubsection{Number of weight calculations}

\begin{figure}[t]
    \centering
    \vspace{-0.25cm}
    \includegraphics[width=\linewidth]{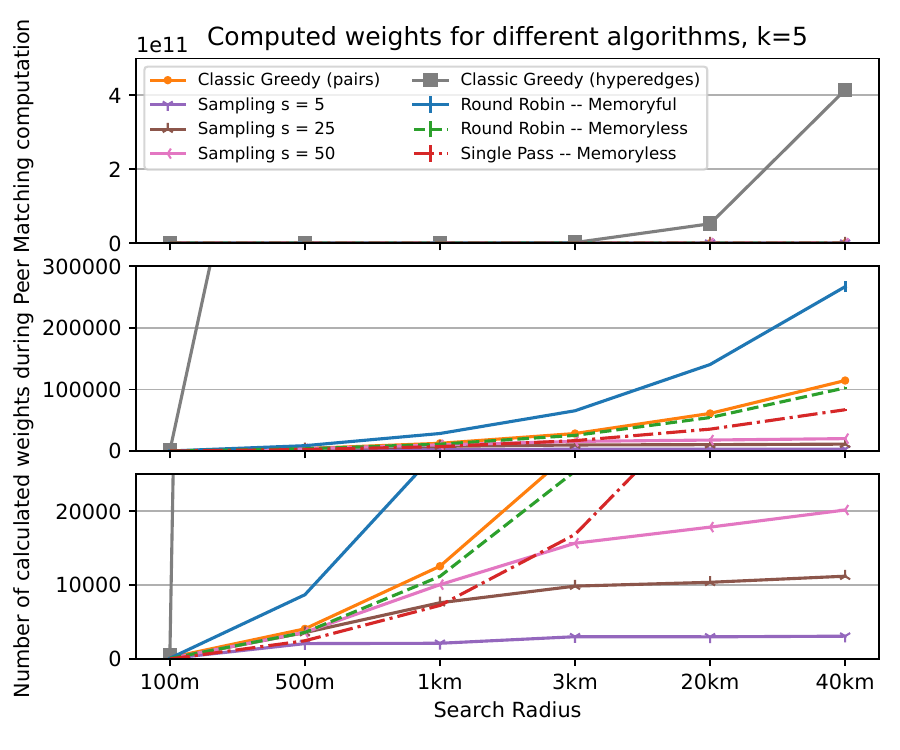}
    \caption{Average number of calculated weights for the different matching algorithms.}
    \label{fig:weights}
\end{figure}

Fig.~\ref{fig:weights} presents the number of weights\footnote{Variability is negligible in our evaluation: the average value is displayed with error bars, if any, indicating min-max values accounting for the different tested weight functions and prosumer orders. \updated{Both Optimal Pairwise and Classic Greedy require the same number of weights, $|E_\Delta|$.}} calculated along the computation of the matchings of Table~\ref{tab:algos}. 
Recall a weight calculation corresponds to running an expensive LP-solver (\S~\ref{subsec:LPsolver}) and may correspond to calculate the cost or cost-saving of a community of size $2$ (for all matching algorithms except Round Robin) up to $k$ (for memoryfull weights of the Round Robin matching).

All algorithms (except the sampling ones) display a quadratic increase in number of weights as the search radius increases, with varying slopes.
Confirming the theoretical upper bounds given in Section~\ref{sec:algorithms}, memoryful weights entail a large computational overhead.
On the contrary, memoryless weights reduce the burden on calculating weights with the Single Pass algorithm only calculating about 57\% of all prosumer-consumer pairs, and less than 25\% of the memoryful weights.
As expected, sampling $s$ values among the neighborhoods further reduces the weight computations to a bit less than $ns$ in total.
\updated{One can observe that all tested algorithms} compute several orders of magnitude less weights than the number of possible communities, at about 414 billion possible communities for $\Delta =$ 40km, that would be required to compute in order to run a greedy algorithm on the possible hyperedges (instead of on the prosumer-consumer pairs).

\subsubsection{Neighborhood size}


\begin{figure}[t]
    \centering
    \includegraphics[width=\linewidth]{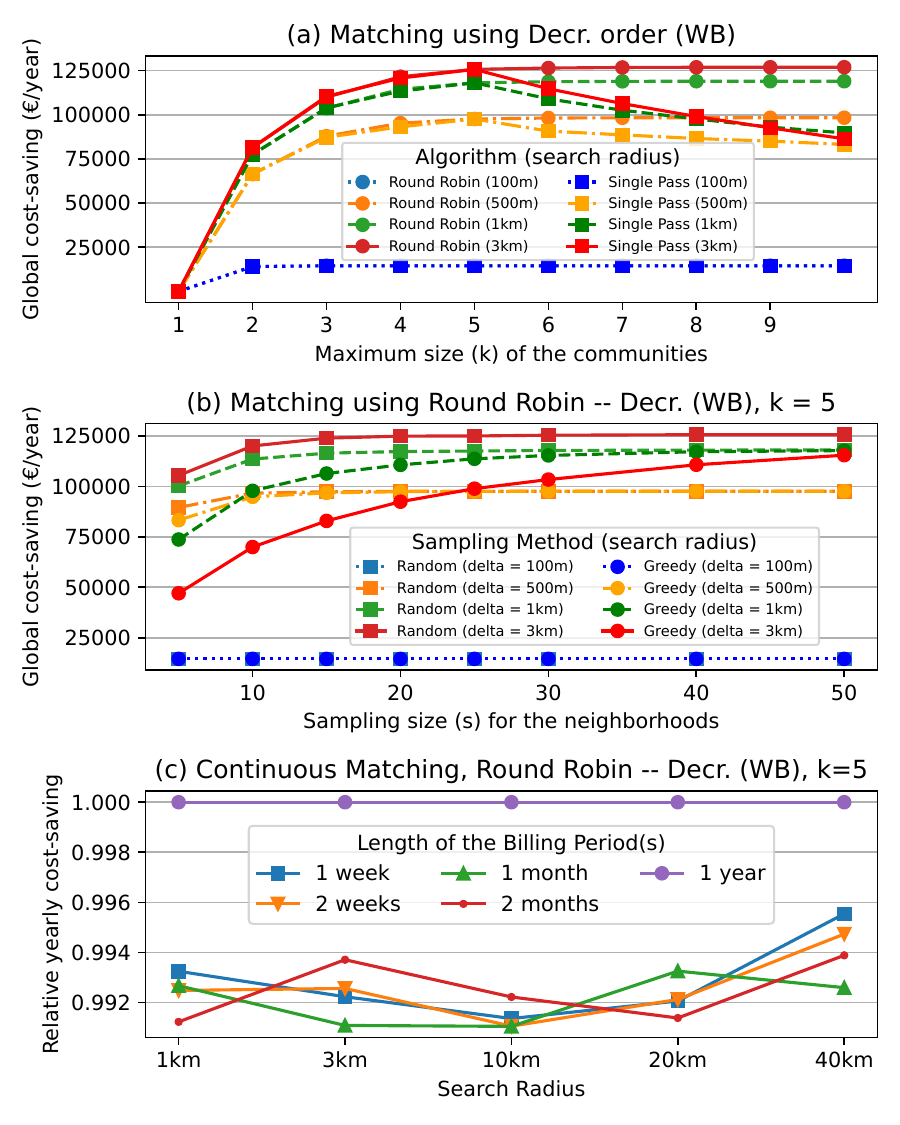}
    \caption{Sensitivity analysis showing the impact of (a) different maximum neighborhood sizes, (b) pre-sampling the neighborhoods and (c) shortening the billing period.}
    \label{fig:sensitivity}
\end{figure}

The neighborhood size providing the best trade-off depends on the search radius, as illustrated by Fig.~\ref{fig:sensitivity}(a) showing the cost saving produced using different bounds on the community size ($k$). 
In our evaluation scenario, using $k=5$ provides a cost only 0.01\% inferior than using $k = 10$ for the Round Robin algorithm. 
For the Single Pass algorithm, using neighborhood sizes above $5$ is even detrimental as the first prosumers get matched with a large amount of peers which removes the possibility for later processed prosumers to get consumers at all.
We note choosing $k$ should follow the distribution of prosumers in the pool. 

\subsubsection{Weight sampling}


We analyze the effect of weight sampling in Fig.~\ref{fig:sensitivity}(b).
The two strategies to sample a fixed number of potential partners in each neighborhood are compared: either perform a random or a greedy selection.
The result clearly favors a random selection which is very efficient even for small sampling size (\updated{here, $2k$ = 10 to $4k$ = 20 neighbors}), whereas the greedy solution only catches up for large sampling size (both converges to the non-sampling algorithm when $s$ reaches $\delta$, the maximum size of the neighborhoods).

\subsubsection{Continuous matching} \label{subsec:continuous}



Impact of performing the peer matching on smaller intervals than one year is studied in Fig.~\ref{fig:sensitivity}(c). 
The figure displays in relative term the cost saving achieved when performing the peer matching every week, second week, month, and second month, with all compared to the one year matching used so far (set to 1).
The cost savings obtained with all smaller periods decrease compare to using a year of data, but only to a very small extend, all reaching at least 99\% of the yearly saving.
Even though shorter interval communities could potentially gain from seasonable changes in finding ideal trading partners among fellow peers, this experiment confirms that long-term communities are very robust in terms of cost-saving.

\subsection{Summary of the Results}

In short, our experimental evaluation advocates that: 
\begin{enumerate}[leftmargin=*]
    \item to maximize cost-efficiency, use Round Robin (decreasing order, WD) for a small search radius and Classic Greedy or Optimal Pairwise for larger search spaces; 
    \item memoryful weights do not provide a significant advantage in terms of cost-saving that justifies the induced overhead, with Round Robin (Decreasing or Resource order, WB) algorithm providing the best trade-off overall in our experimental scenario;
    \item applying a random pre-sampling of each neighborhood of \eg{} $s=20$ weights for $k=5$ reduces drastically the computational overhead with little impact on the cost-efficiency, with further reductions obtained using the Single Pass algorithm. 
    \updated{For a large search radius, the latter only requires a very small fraction of the weight computations of a brute-force approach enumerating all possible communities (about e.g. $2 \cdot 10^{-8}$ for $\Delta$=40km).}
\end{enumerate}

\section{Related Work} \label{sec:related_work}

P2P energy sharing has been in the focus of numerous research works in recent years (cf. the comprehensive survey \cite{tushar2021peer} and references therein). Among those works, one can distinguish two types of P2P sharing communities.
Short-term coalitions (lasting for \eg{} 10min) are formed to cover a single timeslot where game-theoretic approaches are used to optimize the individual gains among other things~\cite{tushar2019grid}.
On the contrary, long-term coalitions~\cite{long2018peer, heinisch2019organizing, chau2023approximately, duvignau2021benefits} seek to form communities that will take coordinated decisions within the same group of peers over months to years. 
Often also associated with geographical closeness, the long-term communities present many advantages from an infrastructure point of view, being able to better regulate local load-balancing of distributed energy resources, providing higher local self-sufficiency and/or reducing the local peak demand.
Research on optimizing the gain outcome from the peering process using constrained optimization has focused on different aspects: how communications are handled~\cite{long2018peer}, reaching stable partitions~\cite{chau2019peer,chau2023approximately}, finding optimal resources based on community sizes~\cite{heinisch2019organizing}, privacy aspects and amount of data transmitted over the network~\cite{khodabakhsh2019prosumer,duvignau2020small,duvignau2021benefits}, etc. 
Small neighborhoods have been shown to provide a high share of possible gain while being favorable in terms of data exchanges~\cite{duvignau2021benefits} and different matching algorithms appear in~\cite{chau2019peer,chau2023approximately}, based upon stable partitions (\ie{} nodes with self-interest) and on a cost-sharing mechanism known beforehand.
However, the algorithmic problem behind the formation of localized and long-term P2P energy sharing communities with a global objective has not been studied before to the best of our knowledge, neither analytically nor from a practical point of view.

\updated{
The solutions explored in this paper relate to the hypergraph matching problem, especially to the works proposing mechanisms to compute practical solutions.
The hypergraph matching problem does not permit a polynomial time $o(k/\log k)$-approximation unless P = NP~\cite{HazanSS06}.
All the best heuristic algorithms that have been proposed to solve the problem are based on the notion of ``local search''~\cite{arkin1998local}.
Local search consists in incrementally improving a starting solution by performing a series of small changes (typically switching membership of a node from one partition to another), that is known to be competitive to the optimal algorithm.  
A local search algorithm is used for example in~\cite{chandra2001greedy} to solve greedily the weighted $k$-set problem while showing that the introduced algorithm is $2(k+1)/3$-competitive.
One of the most competitive algorithms for the hypergraph matching problem is given by Berman~\cite{berman2000d}, providing a $(k+1)/2$-approximation algorithm. 
The algorithm is based on finding independent set in $d$-claw free graphs, and it requires a series of prepossessing steps in order to solve the hypergraph matching problem, \eg{} \cite{zhang2019virtual} makes these steps explicit to obtain a hypergraph matching algorithm from Berman's algorithm. 
Other works~\cite{wang2016hypergraph,zhao2020hypergraph} have used a similar pipeline to solve greedily the hypergraph matching problem based on local search.
Recent improvements on Berman's algorithm are given in~\cite{Neuwohner21,Neuwohner22,Neuwohner23}.
MWHM has also been studied in a distributed setting~\cite{cui2007distributed}, as well as under ``$b$-matching'' generalizations~\cite{parekh2014generalized} where nodes can appear in $b$ different hyperedges instead of one only. 
}

All local search-based algorithms, however, assume access to the full list of weights since their starting point is the output of what we referred here as the ``Classic Greedy'' algorithm. Also, those algorithms have a time complexity that is exponential in $k$~\cite{chandra2001greedy}.
In~\cite{ma2016hypergraph}, computing all the weights was already identified as a computational bottleneck when the number of possible hyperedges is large, proportional to $\mathcal{O}(n^k)$ when there are $n$ nodes in the input hypergraph. 
The authors hence propose a heuristic algorithm specific to their problem that only necessitates to compute $\mathcal{O}(kn^2)$ weights using an algorithm of time-complexity $\mathcal{O}(kn^3)$.
For the number of computed weights, this matches the same bound shown here when using memoryful weights and we further reduce it to only $\mathcal{O}(n^2)$ weights while introducing the notion of ``memoryless weights'' \updated{and exploiting pairwise weights}.
All the algorithms studied in this work have also noticeable smaller time complexities, namely $\mathcal{O}(k n^2)$, $\mathcal{O}(n^2)$ and $\mathcal{O}(n^2\log n)$.
Note the problem studied here adds three additional constraints to hypergraph matching: requirement for a bipartite solution, spacial constraints and a challenging environment where the weight of a hyperedge is computationally expensive to obtain.

\updated{
Recently, Duvignau and Klasing~\cite{duvignau2023greediness} demonstrated several analytical results for achieving bounded approximations for the assignment problem (\ie{} bipartite graph matching) using greedy algorithms that focus on both producing a good matching while inspecting only $\mathcal{O}(n)$ weights.  
Their analysis assumes that the input graph is processed in an heuristic order where edges associated with higher weights are usually processed earlier by the matching algorithm.
This completely matches our setting where prosumers with larger renewable resources are more likely to produce more cost-saving benefit, and thus their results entail that formal performance guarantees can also be derived for the heuristic algorithms that we have introduced in \S~\ref{sec:algorithms}.
}



\section{Conclusion} \label{sec:conclusion}

This paper studies the peer matching problem to participate in P2P energy sharing.
We introduce the Geographical Peer Matching problem within a well-defined mathematical framework setting the problem as a hypergraph matching problem with a bounded search radius $\Delta$ and output partitions of size up to $k$.
This allows known approximation algorithms for the weighted $k$-set problem to be ported to our energy sharing setting.
To provide an efficient solution to the problem, we introduce and analyze three different matching algorithms that do not require to compute all $\mathcal{O}(n^k)$ possible weights (each requiring the run of a \updated{computationally expensive} LP-solver) but only at most $\mathcal{O}(kn^2)$. 
\updated{Pre-sampling the weights can further reduce the number of weight calculations to $\mathcal{O}(sn)$.}
The introduced algorithms are both scalable in terms of required computation and efficient in terms of the quality of the computed solution. 
Our extensive experimental study indeed shows that up to $91.6$\% of the benefit of a very large community (\ie{} of size $k = n$ and unbounded geographical diameter $\Delta = \infty$) can be obtained by limiting communities to $5$ nodes only using a small bounded geographical search radius. 
We also provide optimizations that, even though they do not change asymptotic behaviors of the proposed algorithms, are shown in this work to yield a practical computational advantage without sacrificing the quality of the produced solution.
We expect that as the introduced algorithms are more general than our specific problem, they can also be useful in other contexts where it is challenging to compute a solution to the hypergraph matching problem.

Some practical aspects for further research are (1) how to push parts of the matching computation towards the end-users for a more edge-friendly solution, for saving data transfers and caring about the privacy perspective~\cite{duvignau2019streaming,havers2020driven}, while transitioning from batch-based to the online analysis required by today's smart metering infrastructure~\cite{van2018echidna,van2018locovolt}, and (2)~how to update the matching dynamically and maintain a stable network through the arrivals and departures of peers~\cite{duchon2014local,georgiadis2013overlays}, possibly building on and adapting previous distributed and adaptive algorithms for matching with preferences~\cite{georgiadis2012adaptive,khan2016efficient}.


\bibliographystyle{elsarticle-num}
\bibliography{bib} 

\end{document}